\newcommand{\cmark}{\textcolor{Emerald}{\ding{51}}\xspace}
\newcommand{\xmark}{\textcolor{red}{\ding{55}}\xspace}
\definecolor{ptblue}{RGB}{15,76,129} 
\definecolor{ptemerald}{HTML}{009473} 
\let\OLDland\land
\renewcommand{\land}{\:\OLDland\:}
\let\OLDlor\lor
\renewcommand{\lor}{\:\OLDlor\:}
\let\OLDforall\forall
\renewcommand{\forall}{\;\OLDforall\:}
\let\OLDexists\exists
\renewcommand{\exists}{\;\OLDexists\,}
\DeclareMathOperator*{\argmax}{arg\,max}
\DeclareMathOperator*{\argmin}{arg\,min}
\newcommand{\indicator}[1]{\mathbbm{1}_{\left\{#1\right\}}\xspace}
\let\Cref\crtCref
\let\cref\crtcref
\theoremstyle{plain}
\newtheorem{theorem}{Theorem}[section]
\newtheorem{claim}[theorem]{Claim}
\newtheorem{lemma}[theorem]{Lemma}
\newtheorem{proposition}[theorem]{Proposition}
\theoremstyle{definition}
\newtheorem{definition}[theorem]{Definition}
\theoremstyle{remark}
\newcommand{\Real}{\mathbb{R}}
\newcommand{\E}{\mathbb{E}}
\newcommand{\cost}{\operatorname{cost}}
\newcommand{\approvalBallotOfAgent}[1]{A_{#1}} 
\newcommand{\BBOne}{BB1\xspace} 
\newcommand{\BFX}{BFx\xspace} 
\newcommand{\feasibleFracOutcomes}[1]{\if\relax\detokenize\expandafter{\@firstofone#1{}}\relax\mathcal{X}\else\mathcal{X}\left( #1 \right)\fi}
\newcommand{\EJR}{\text{EJR}\xspace}
\newcommand{\FJR}{\text{FJR}\xspace}
\newcommand{\MES}{\text{MES}\xspace}
\newcommand{\GCR}{\text{GCR}\xspace}
\newcommand{\budgetMES}{b}
\newcommand{\BWGCRforPB}{BW-GCR-PB\xspace}
\newcommand{\BWMESforPB}{BW-MES-PB\xspace}
\title{Fair Lotteries for Participatory Budgeting}
\author{Haris Aziz \qquad
Xinhang Lu \qquad
Mashbat Suzuki \qquad
Jeremy Vollen \qquad
Toby Walsh \bigskip\\
UNSW Sydney, Australia \medskip\\
\texttt{\{haris.aziz, xinhang.lu, mashbat.suzuki, j.vollen, t.walsh\}@unsw.edu.au}}
\date{}
\begin{document}
\maketitle

\begin{abstract}
In pursuit of participatory budgeting (PB) outcomes with broader fairness guarantees, we initiate the study of lotteries over discrete PB outcomes.
As the projects have heterogeneous costs, the amount spent may not be equal ex ante and ex post.
To address this, we develop a technique to bound the amount by which the ex-post spend differs from the ex-ante spend---the property is termed \emph{budget balanced up to one project (BB1)}.
With respect to fairness, we take a best-of-both-worlds perspective, seeking outcomes that are both ex-ante and ex-post fair.
Towards this goal, we initiate a study of ex-ante fairness properties in PB, including Individual Fair Share (IFS), Unanimous Fair Share (UFS) and their stronger variants, as well as Group Fair Share (GFS).
We show several incompatibility results between these ex-ante fairness notions and existing ex-post concepts based on justified representation.
One of our main contributions is a randomized algorithm which simultaneously satisfies ex-ante Strong UFS, ex-post full justified representation (FJR) and ex-post BB1 for PB with binary utilities.
\end{abstract}

\section{Introduction}

Participatory Budgeting (PB) is one of the exciting democratic paradigms that facilitates members of a community, municipality, or town to collectively make public project funding decisions.
Considering that PB generalizes classical voting and committee selection problems in social choice theory, it also poses interesting axiomatic and algorithmic research challenges~\citep{AzSh21,ReMa23}.
A major effort underway in computational social choice is to design meaningful axioms that capture elusive properties such as fairness and representation, and to design computationally efficient algorithms that satisfy such axioms.
There is also a movement to apply such rules in various countries (see, e.g., \url{https://equalshares.net}).

While existing algorithms for selecting discrete projects satisfy meaningful proportional representation properties, they do not provide any minimal representation guarantees to individual voters, even in an ex-ante sense.
For example, existing rules allow for the possibility that certain voters do not like any of the projects selected by the PB process.
Indeed, this is an inevitable reality for any deterministic algorithm.
One approach to address this issue is the use of randomization to achieve stronger fairness properties ex-ante.
This approach has been successful in various contexts, such as resource allocation~\citep{BoMo01a}, voting~\citep{BMS05a}, and committee voting~\citep{CJM+20a}.

In this paper, we initiate the study of randomization for PB.
We begin by studying the problem of implementation in PB: \textit{Given a marginal probability for each project, how can we compute a probability distribution (or lottery) over discrete outcomes that realizes these probabilities?}
While this question has already been answered in the PB special case of committee voting~\citep{CJM+20a}, we find that the PB setting with heterogeneous costs gives rise to significant obstacles on this front.
Since, unlike committee voting, we cannot guarantee that the total spend is equal ex-ante and ex-post, we develop a technique which bounds the amount by which the ex-post spend differs from the ex-ante spend.

Given our new implementation approach, we then return to the objective of strong ex-ante fairness properties in PB.
Our goal is to achieve ex-ante fairness while also ensuring that our lotteries only include ex-post fair outcomes, thus guaranteeing desirable fairness, both ex-ante and ex-post.
This approach is known as the \emph{best-of-both-worlds fairness} perspective, and has been employed in other adjacent contexts, such as fair division~\citep[e.g.,][]{AFS+23a} and committee voting~\citep{ALS+23}.
In this paper, we define a hierarchy of ex-ante fairness properties for PB and then consider a hierarchy of PB settings, giving algorithms which guarantee best-of-both-worlds fairness and/or incompatibility results for each setting considered.

\subsection{Our Results}

In \Cref{sec:PB:BOBW-budget}, we tackle the question of implementation in PB.
We first show that, unlike the committee voting setting, fractional outcomes cannot always be implemented by a lottery over integral outcomes using the same amount of budget.
Given this, we define a well-motivated axiom for lotteries --- \emph{budget balanced up to one (BB1)} --- which enforces a natural bound on the amount by which the total ex-post cost can differ from the cost of the fractional outcome the lottery implements.
We then demonstrate an approach which gives an implementation satisfying our axiom for any fractional outcome, and complement this result by showing that a lottery satisfying a natural ``up to any'' strengthening of our axiom may not always exist.

In \Cref{sec:PB:ex-ante}, we extend a hierarchy of well-studied properties capturing ex-ante fairness from the committee voting setting to our most general PB setting.
The ``fair share'' hierarchy of fairness axioms begins with the basic notion that each voter should receive at least a $1/n$ fraction of their optimal utility.
On the other hand, the ``strong fair share'' hierarchy starts with the stronger guarantee that each voter should be able to control their proportion of the budget.

We then turn to the goal of achieving best-of-both-worlds fairness in PB.
In \Cref{sec:PB:approval}, we investigate the special case of PB with binary utilities.
We show that our strongest ex-ante fairness notion cannot be guaranteed in tandem with any of our ex-post fairness notions (\Cref{sec:PB:approval:impossibility:GFS+JR}), notable since this is not the case in committee voting.
We then give a strong, positive result using an exponential-time algorithm (\Cref{sec:PB:approval:SUFS+FJR}) and a slightly weaker positive result using a polynomial-time algorithm (\Cref{sec:PB:approval:SUFS+EJR}).
Lastly, in \Cref{sec:PB:general}, we show that in the general model with cardinal utilities, ex-ante and ex-post fairness are not compatible, even for the weakest pair of axioms and even in the restricted case where projects are of unit cost. Our best-of-both-worlds fairness results are summarized in \Cref{fig:results_summary}.

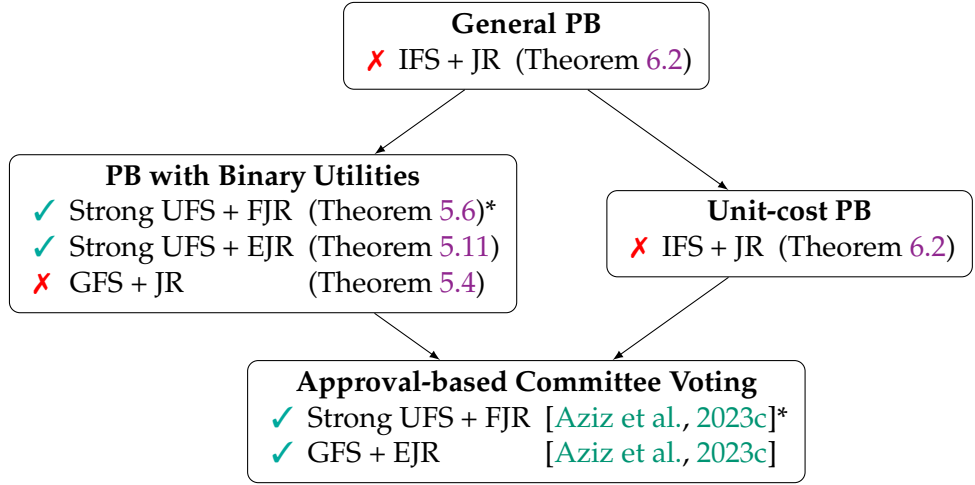
\begin{figure}
\centering
\begin{tikzpicture}[>=latex]
\node[draw, rounded corners, align=center] at (0,0) (general) {
\textbf{General PB} \\
\addtolength{\tabcolsep}{-0.3em}
\begin{tabular}{lll}
\xmark & IFS + JR & (\Cref{thm:PB:unit-cost:impossibility:IFS+JR})\\
\end{tabular}
};

\node[draw, rounded corners, align=center] at (-3.5,-2.5) (approval_pb) {
\textbf{PB with Binary Utilities} \\
\addtolength{\tabcolsep}{-0.3em}
\begin{tabular}{lll}
\cmark & Strong UFS + FJR & (\Cref{thm:PB:approval:sUFS+FJR})*\\
\cmark & Strong UFS + EJR & (\Cref{thm:PB:approval:sUFS+EJR})\\
\xmark & GFS + JR & (\Cref{thm:PB:approval:impossibility:GFS+JR})\\
\end{tabular}
};

\node[draw, rounded corners, align=center] at (3.5,-2.5) (unitcost) {
\textbf{Unit-cost PB} \\
\addtolength{\tabcolsep}{-0.3em}
\begin{tabular}{lll}
\xmark & IFS + JR & (\Cref{thm:PB:unit-cost:impossibility:IFS+JR})\\
\end{tabular}
};

\node[draw, rounded corners, align=center] at (0,-5) (abc) {
\textbf{Approval-based Committee Voting} \\
\addtolength{\tabcolsep}{-0.3em}
\begin{tabular}{lll}
\cmark & Strong UFS + FJR & \citep{ALS+23}* \\
\cmark & GFS + EJR & \citep{ALS+23} \\
\end{tabular}
};

\draw[->]
(general) edge (approval_pb) (approval_pb) edge (abc);
\draw[->]
(general) edge (unitcost) (unitcost) edge (abc);
\end{tikzpicture}
\caption{Summary of best-of-both-worlds fairness results in PB and special cases.
Arrows point from generalizations to special cases.
Compatibility results are represented by \cmark and impossibility results by \xmark.
(*) denotes exponential-time results.}
\label{fig:results_summary}
\end{figure}

\subsection{Related Work}

There is a fast growing body of work investigating proportionally representative outcomes in indivisible PB \citep[e.g.,][]{ALT18a,PPS21a,LCG22,MSWW22,BFL+23a,ReMa23}.
Since PB generalizes committee voting \citep{LaSk23a}, work on proportionality in PB often extends axioms and algorithms from the literature on proportional representation in committee voting~\citep{ABC+17,SEL+17,EFI+22,BrPe23a}.
It is to this literature that we are adding the tool of randomization.

Best-of-both–worlds fairness has been examined in several papers in the context of resource allocation~\citep{BEF22,AFS+23a,AGM23a,HSV23a,FMNP23}.
A couple of the earlier papers in this line of work are by \citet{AFS+23a} and \citet{Aziz19a}.
In recent work, \citet{ALS+23} initiated a best-of-both-worlds fairness perspective for the social choice setting of approval-based committee voting.
\citet{SuVo24} later proposed a stronger ex-ante fairness notion and provided an improved best-of-both-worlds fairness result.
We extend this approach to multiple PB settings, all of which generalize approval-based committee voting.

Indeed, we are the first to study lotteries in PB.
In the single-winner voting setting, \citet{BMS05a} computed ``mixtures'' of public outcomes which satisfy natural axioms seeking to give groups of agents their fair share, the same set of axioms which have inspired the ex-ante axioms in this work.

Lottery implementation techniques  have been studied in other social choice settings.
For example, the dependent randomized rounding technique of \citet{GKPS06} has been employed to compute randomized outcomes which implement desirable fractional outcomes in various settings such as fair division~\citet{AkNi20a} and committee selection~\citep{CJM+20a,MSWW22}.
In the context of apportionment, \citet{ALM+19a} and \citet{GPP22a} have created new rounding techniques to facilitate randomization when distributing legislative seats.

There are also several papers which study divisible PB, wherein projects can be funded fractionally.
\citet{FainGoMu16} showed that the Nash solution in this setting is in the \emph{core}, a property which captures proportional representation.
In this paper and most work in this area, projects do not have fixed costs, and thus any distribution of budget amongst projects is feasible, which contrasts significantly with our setting.
\citet{GKS+19a} incorporated project costs in the divisible PB setting and study strategic concerns.
However, their outcomes still allow for fractional project funding.
In this paper, we investigate the question of how such an outcome can be converted into a lottery over outcomes in which funding decisions are binary.

\section{Preliminaries}
\label{sec:PB:prelim}

For any positive integer~$t \in \mathbb{N}$, let $[t] \coloneqq \{1, 2, \dots, t\}$.
A participatory budgeting (PB) \emph{instance} is represented as a tuple $I = \langle N, C, \cost, B, (u_i)_{i \in [n]} \rangle$, where:
\begin{itemize}
\item $N = [n]$ and $C = [m]$ are the set of \emph{voters} and \emph{projects}, respectively.

\item $\cost \colon C \to \Real_{\geq 0}$ is the \emph{cost function}, associating each project~$c \in C$ with its cost that needs to be paid if~$c$ is selected.
For any subset of projects~$T \subseteq C$, denote by $\cost(T) \coloneqq \sum_{c \in T} \cost(c)$ the total cost of~$T$.

We say projects have \emph{unit costs} if $\cost(c) = 1$ for all~$c \in C$, and refer to the setting as \emph{unit-cost PB}.

\item $B \in \Real_{\geq 0}$ is the \emph{budget limit}.
We assume without loss of generality that $\cost(c) \leq B \forall c \in C$ and $\cost(C) \geq B$.

\item For each~$i \in N$, \emph{utility function} $u_i \colon C \to \mathbb{R}_{\geq 0}$ expresses how voter~$i$ values each project.
This most general formulation is referred to as \emph{general PB}.
We call the set of projects for which voter~$i$ has non-zero utility their \emph{approval set} and denote it by $\approvalBallotOfAgent{i}$.

We say voters have \emph{binary utilities} if $u_i(c) \in \{0, 1\}$ for all~$i \in N, c \in C$, and refer to the setting as \emph{PB with binary utilities}.
\end{itemize}

\paragraph{Integral Outcomes}
An \emph{integral outcome} (or simply \emph{outcome}) is a set of projects $W \subseteq C$, and it is said to be \emph{feasible} if $\cost(W) \leq B$.
We assume \emph{additive} utilities, meaning that given a subset of projects~$T \subseteq C$, $u_i(T) \coloneqq \sum_{c \in T} u_i(c)$.

\paragraph{Fractional Outcomes}
A \emph{fractional outcome} is an $m$-dimensional vector $\vec{p} \in [0, 1]^m$, where the component $p_c \in [0, 1]$ represents the fraction of project~$c$ funded.
Given an integral outcome~$W$, for notational convenience, let $\vec{1}_{W} \in \{0, 1\}^m$ be the binary vector whose $j$'th component is~$1$ if and only if $j \in W$.
Let $\cost(\vec{p}) \coloneqq \sum_{c \in C} p_c \cdot \cost(c)$.
A fractional outcome~$\vec{p}$ is said to be \emph{feasible} if $\cost(\vec{p}) = B$.
Given budget~$B'$, denote by~$\feasibleFracOutcomes{B'}$ the space of feasible fractional outcomes.
For simplicity, we use~$\feasibleFracOutcomes{}$ to denote the space of feasible fractional outcomes given budget~$B$.
Given a fractional outcome~$\vec{p}$, voter~$i$'s utility is denoted by $u_i(\vec{p}) \coloneqq \sum_{c \in C} p_c \cdot u_i(c)$.

\paragraph{Lotteries and Implementation}
A \emph{lottery} is a probability distribution over integral outcomes.
Formally, a lottery is specified by a set of~$s \in \mathbb{N}$ tuples $\{(\lambda_j, W_j)\}_{j \in [s]}$, where $\lambda_j \in [0, 1]$, $\sum_j \lambda_j = 1$, and for every $j \in [s]$, the integral outcome $W_j \subseteq C$ is selected with probability~$\lambda_j$.
A lottery $\{(\lambda_j, W_j)\}_{j \in [s]}$ is called an \emph{implementation} of (or, interchangeably, \emph{implements}) a fractional outcome~$\vec{p}$ if $\vec{p} = \sum_{j \in [s]} \lambda_j \cdot \vec{1}_{W_j}$.
In this paper, we only consider lotteries which implement feasible fractional outcomes.
We say a lottery satisfies a property \emph{ex ante} (resp., ex post) if the fractional outcome it implements (resp., every integral outcome in its support) satisfies the property.

This paper concerns the problem of designing (randomized) PB rules (or, interchangeably, algorithms) that simultaneously achieve desirable properties both ex ante and ex post.
Our algorithms do not explicitly output the desired lotteries (which in principle can be exponential in size), but instead sample integral outcomes from them.

\section{Implementing Fractional Outcomes in PB}
\label{sec:PB:BOBW-budget}

In this section, we study how to implement a fractional outcome in the context of participatory budgeting.
Decomposing a fractional outcome into a distribution over integral outcomes introduces novel challenges in the presence of costs.

Recall that implementing a fractional outcome $\vec{p}$ entails computing a probability distribution over integral outcomes $\Delta = \{(\lambda_j, W_j)\}_{j \in [s]}$ that realizes marginal probability $p_k$ for each project $k\in C$.
As a result, any implementation of a feasible fractional outcome has the property that $\E_{W \sim \Delta}[\cost(W)] = B$.
It is now easy to see that unless each integral outcome in the support of the lottery has cost equal to~$B$ (not possible in general), there must exist an integral outcome in the support of the lottery that exceeds the budget.

The aforementioned issue raises the natural question of whether it is possible to implement a fractional outcome while bounding the ex-post budget violations.
This is especially important in participatory budgeting since if the ex-post budget constraint is exceedingly violated, such an outcome is unlikely to be implemented in practice. We thus formalize an axiom which guarantees that an integral outcome is approximately within budget.

\begin{definition}[\BBOne]
An integral outcome~$W$ is said to be \emph{budget balanced up to one project (\BBOne)} if either
\begin{itemize}
\item $\cost(W) \leq B$ and there exists some project~$c \in C \setminus W$ such that $\cost(W \cup \{c\}) \geq B$, or
\item $\cost(W) \geq B$ and there exists some project~$c \in W$ such that $\cost(W \setminus \{c\}) \leq B$.
\end{itemize}
\end{definition}

We now show, perhaps surprisingly, that \emph{any} feasible fractional outcome~$\vec{p}$ can be implemented by a lottery over integral outcomes, each of which satisfies \BBOne.

\begin{theorem}
\label{thm:PB:Gandhi-et-al}
For any feasible fractional outcome~$\vec{p}$, there exists a random process running in polynomial time, that defines random variables $P_i \in \{0, 1\}$ for all~$i \in C$ such that the following properties hold:
\begin{enumerate}[label=\textbf{(P\arabic*)}]
\item\label{enum:marginal-distribution}
$\E[P_i] = p_i$ for each~$i \in C$;

\item\label{enum:degree-preservation}
Random integral committee $W = \{i \in C \mid P_i =1\}$ satisfies \BBOne with probability~$1$.
\end{enumerate}
\end{theorem}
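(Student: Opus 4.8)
The plan is to prove the theorem by randomized pipage rounding along the single linear (budget) constraint --- a form of dependent rounding in the spirit of \citet{GKPS06} --- keeping the cost of the running fractional point equal to $B$ at all times while gradually integralizing coordinates. Since projects of cost $0$ never affect $\cost(\cdot)$, I set them aside and round each such coordinate $p_c$ to $1$ independently with probability $p_c$, re-inserting them into $W$ at the end. For the positive-cost projects, I repeat the following \emph{pipage step} while at least two coordinates are fractional: pick two fractional coordinates $a$ and $b$, and consider the family of points obtained by increasing $p_a$ by $t/\cost(a)$ while simultaneously decreasing $p_b$ by $t/\cost(b)$, which leaves $\cost$ unchanged for every $t$. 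Let $t_1>0$ be the largest $t$ keeping the two modified coordinates in $[0,1]$, and $t_2>0$ the largest $\tau$ for which displacement by $-\tau$ does so; with probability $t_2/(t_1+t_2)$ apply the displacement $+t_1$, otherwise apply $-t_2$. Each step forces at least one of $p_a,p_b$ into $\{0,1\}$, so after at most $m$ steps at most one coordinate remains fractional; that last fractional coordinate $c$ (if any) is then rounded to $1$ independently with probability $p_c$.

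Next I would verify property \ref{enum:marginal-distribution}. The choice $t_2/(t_1+t_2)$ is exactly what makes a pipage step a martingale in every coordinate: conditioned on the current point, $\E[\Delta p_a]=\frac{t_2}{t_1+t_2}\cdot\frac{t_1}{\cost(a)}-\frac{t_1}{t_1+t_2}\cdot\frac{t_2}{\cost(a)}=0$, symmetrically $\E[\Delta p_b]=0$, and every other coordinate is untouched; the independent roundings of the final fractional coordinate and of the cost-$0$ coordinates also have the correct conditional expectation. As all coordinates stay in $[0,1]$ throughout, induction over the steps gives $\E[P_i]=p_i$ for each $i\in C$. The polynomial running time claimed in the statement is immediate: each pipage step integralizes a coordinate, so there are $O(m)$ steps, each a constant number of arithmetic operations (on numbers of polynomially bounded bit-length).

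The heart of the argument is property \ref{enum:degree-preservation}, that $W$ is \BBOne almost surely. The key point is that $\cost$ is invariant under every pipage step, so at termination the current (still possibly fractional) point has cost exactly $B$. If the positive-cost part ends with one fractional coordinate $c$, $p_c\in(0,1)$, and $W_0$ denotes the positive-cost projects already set to $1$, then feasibility gives $\cost(W_0)+p_c\cost(c)=B$, hence $\cost(W_0)\in\bigl(B-\cost(c),\,B\bigr)$. Rounding $P_c$ to $0$ yields $\cost(W_0)<B$ and $\cost(W_0\cup\{c\})=\cost(W_0)+\cost(c)>B$, so the first \BBOne clause holds with witness $c$; rounding $P_c$ to $1$ yields $\cost(W_0\cup\{c\})>B$ while $\cost(W_0)\le B$, so the second clause holds with witness $c$. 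If no coordinate is fractional at termination, then $\cost(W)=B$ and \BBOne holds trivially: when $C\setminus W\neq\emptyset$ the first clause holds with any witness, and otherwise $W=C$ and the second clause holds with any selected project. Re-inserting the rounded cost-$0$ coordinates changes neither $\cost(W)$ nor the validity of the clause exhibited (the witness, when it is a selected project, still lies in $W$).

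I expect the only real work to be bookkeeping rather than any genuine obstacle: checking that a pipage step is well defined whenever $a,b$ are fractional with positive cost (so $t_1,t_2>0$ and the step probabilities make sense), that the numbers stay polynomially bounded over the $O(m)$ iterations, and that the degenerate inputs --- an already (nearly) integral $\vec p$, projects of cost $0$, and termination with no fractional coordinate --- are handled as above. Beyond that, the proof rests on just two facts: pipage preserves $\cost$, and the single surviving fractional coordinate obeys the ``squeeze'' $\cost(W_0)\in(B-\cost(c),B)$ --- which is exactly the slack \BBOne is designed to tolerate.
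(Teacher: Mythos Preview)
Your proposal is correct and takes essentially the same approach as the paper: both apply the dependent rounding / pipage scheme of \citet{GKPS06} along the single cost constraint, verify the martingale property coordinate-wise for \ref{enum:marginal-distribution}, and use cost-invariance of the pipage step together with an independent rounding of the last remaining fractional coordinate to obtain \ref{enum:degree-preservation}. The only cosmetic differences are your parametrization of the pipage displacement (by $t/\cost(\cdot)$ rather than the paper's $\alpha,\beta$) and your explicit treatment of cost-$0$ projects, a corner case the paper does not address.
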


\begin{proof}
The proof of \Cref{thm:PB:Gandhi-et-al} follows from applying the dependent rounding technique introduced by \citet{GKPS06}.

Given a fractional outcome $\vec{p} = (p_1, \dots, p_m)$ with $p_i \in [0, 1]$, we probabilistically modify each~$p_i$ to a random variable $P_i \in \{0, 1\}$ such that the random variables satisfy the properties \ref{enum:marginal-distribution} and \ref{enum:degree-preservation}.

We now describe the algorithm.
Let $\vec{q}^{\,0} = \vec{p}$.
We iteratively and randomly modify $\vec{q}^{\,0}$ in rounds.
Denote $\vec{q}^{\,t} = (q_1^t, q_2^t, \dots, q_m^t)$ as the values at round~$t$.
In each round, we update the values of at most two indices while keeping the values of all other indices constant.
Let $F^t = \{i \in C \mid  q_i^t \in (0,1)\}$ be the set of indices that are fractional in round~$t$.
The update rule depends on the cardinality of~$F^t$.

If $|F^t| \geq 2$, we arbitrarily select two indices $i, j \in F^t$ and run the following randomized update rule:
\[
(q^{t+1}_i, q^{t+1}_j) =
\begin{cases}
\left( q^t_i + \alpha, q^t_j -\frac{\cost(i)}{\cost(j)} \cdot \alpha \right) \; \text{ w.p. } \frac{\beta}{\alpha + \beta} \\
\left( q^t_i - \beta, q^t_j + \frac{\cost(i)}{\cost(j)} \cdot \beta \right) \; \text{ w.p. } \frac{\alpha}{\alpha + \beta}
\end{cases}
\]
where
\[
\alpha = \min \{\gamma > 0 \mid q^t_i + \gamma = 1 \text{ or } q^t_j - \frac{\cost(i)}{\cost(j)} \cdot \gamma = 0\},
\]
and
\[
\beta = \min \{\gamma > 0 \mid q^t_i - \gamma = 0 \text{ or } q^t_j + \frac{\cost(i)}{\cost(j)} \cdot \gamma = 1\}.
\]
For all other indices $\ell \in C \setminus \{i, j\}$, we set $q^{t+1}_\ell = q^t_\ell$.

If $|F^t| = 1$, we select the fractional index~$\ell \in F^t$ and set $q^{t+1}_\ell = 1$ with probability~$q^t_\ell$, and $q^{t+1}_\ell = 0$ with probability $1 - q^t_\ell$.

Finally, when no fractional indices exist, meaning $|F^t| = 0$ and hence $q^t_i \in \{0, 1\}$ for each~$i\in C$, we terminate the algorithm and set $P_i = q^t_i$ for all~$i \in C$.

The next two observations immediately follow from the algorithm's description.

\smallskip
\noindent\textbf{Observation 1)}
After each round, at least one index with a fractional value becomes integral i.e., $|F^{t+1}| < |F^{t}|$.

\smallskip
\noindent\textbf{Observation 2)}
Once a fractional index becomes integral, its values do not change.

Note that by the first observation and since $|F^0| \leq |C|$, we see that the number of rounds is at most~$|C|$.
As a result, we see that the random process does indeed run in polynomial time.

In what follows, we show the two properties and begin with the first property.

\begin{proof}[Proof of \ref{enum:marginal-distribution}]
For each~$i \in C$, let $P^t_i$ be the random variable denoting the value of~$q^t_i$.
We show that in each round, $\E[P^{t+1}_i] = \E[P^t_i]$ for each~$i \in C$.
The assertion trivially holds true for indices that were not updated, thus we focus on indices which were updated.

Suppose $|F^t| \geq 2$, and let $i, j$ be the indices which were chosen in the update rule. Then for all~$\zeta \in [0, 1]$,
\begin{align*}
\E[P^{t+1}_i \mid P^t_i = \zeta] &= \frac{\beta (\zeta + \alpha)}{\alpha + \beta} +  \frac{\alpha (\zeta - \beta)}{\alpha + \beta} = \zeta; \\
\E[P^{t+1}_j \mid P^t_j = \zeta] &= \left( \zeta - \frac{\cost(i)}{\cost(j)} \alpha \right) \frac{\beta}{\alpha + \beta} + \left( \zeta + \frac{\cost(i)}{\cost(j)} \beta \right) \frac{\alpha}{\alpha + \beta} = \zeta.
\end{align*}
Let $\Omega$ be the set of all possible values of $q^{t}_i$, then we see that
\begin{align*}
\E[P^{t+1}_i] = \sum_{\zeta \in \Omega} \E[P^{t+1}_i \mid P^t_i = \zeta] \cdot \Pr[P^t_i = \zeta] = \sum_{\zeta \in \Omega} \zeta \cdot \Pr[P^t_i = \zeta] = \E[P^t_i].
\end{align*}
An analogous argument can be used to show $\E[P^{t+1}_j] = \E[P^t_j]$.

Consider now the case in which the update rule is applied when $|F^t| = 1$.
Let~$\ell$ be the fractional index which was updated.
Then for all~$\zeta \in [0, 1]$,
\[
\E[P^{t+1}_\ell \mid P^t_\ell = \zeta] = 1 \cdot \zeta + 0 \cdot (1 - \zeta) = \zeta.
\]
Thus, we get that $\E[P^{t+1}_\ell] = \E[P^t_\ell]$ by a similar argument as seen in the previous paragraph.

It follows that in each round $\E[P^{t+1}_i] = \E[P^t_i]$ for all~$i \in C$.
Let~$t_f$ be the round in which the algorithm terminates, recursively applying the identity we see that $\E[P_i] = \E[P^{t_f}_i] = \E[P^0_i] = p_i$ for each~$i \in C$.
\end{proof}

We now proceed to show the second property.

\begin{proof}[Proof of \ref{enum:degree-preservation}]
We need to show that the random integral outcome $W = \{i \in C \mid P_i = 1\}$ satisfies \BBOne with probability~$1$.
We first show that whenever the update rule is applied with $|F^t| \geq 2$, it holds that $\sum_{i \in C} \cost(i) \cdot q_i^{t+1} = \sum_{i \in C} \cost(i) \cdot q_i^t$.
Suppose indices~$i, j$ were selected to be updated in round~$t$.
Then, regardless of the realization of the randomized update rule, the following holds:
\begin{align*}
\cost(i) \cdot q^{t+1}_i + \cost(j) \cdot q^{t+1}_j
= &\begin{cases}
&\cost(i) \cdot (q^t_i + \alpha) + \cost(j) \cdot \left( q^t_j - \frac{\cost(i)}{\cost(j)} \cdot \alpha \right) \\
&\qquad\qquad\qquad \text{or} \\
&\cost(i) \cdot (q^t_i - \beta) + \cost(j) \cdot \left( q^t_j + \frac{\cost(i)}{\cost(j)} \cdot \beta \right)
\end{cases} \\
= &\cost(i) \cdot q^t_i + \cost(j) \cdot q^t_j.
\end{align*}
As values of all other indices $C \setminus \{i, j\}$ were unchanged, we see that $\sum_{i \in C} \cost(i) \cdot q_i^{t+1} = \sum_{i \in C} \cost(i) \cdot q_i^t$.
Thus, noting that $\sum_{i \in C} \cost(i) \cdot q^0_i = \sum_{i \in C} \cost(i) \cdot p_i = B$, we have $\sum_{i \in C} \cost(i) \cdot q_i^{t+1} = B$ whenever the update rule is applied with $|F^t| \geq 2$.
Note that in all rounds, except possibly the last round, we have $|F^t| \geq 2$.
Hence we see that $\sum_{i \in C} \cost(i) \cdot q_i^{t_f - 1} = B$, where~$t_f$ is the round when the algorithm terminates.

In round~$t_f-1$, either the update rule is applied with $|F^{t_f-1}| \geq 2$ or $|F^{t_f-1}| = 1$.
If the update rule is applied with $|F^{t_f-1}| \geq 2$, then $\sum_{i \in C} \cost(i) \cdot q_i^{t_f} = \sum_{i \in C} \cost(i) \cdot q_i^{t_f-1} = B$ and hence $\sum_{i \in C} \cost(i) \cdot P_i = B$, meaning that the random committee $W = \{i \in C \mid P_i = 1\}$ is \BBOne.

Now suppose that $|F^{t_f-1}| = 1$ and let $\ell \in F^{t_f-1}$.
We may write,
\begin{align*}
B &= \sum_{i \in C} \cost(i) q_i^{t_f-1} = \sum_{i \in C \setminus \ell} \cost(i) q_i^{t_f-1} + \cost(\ell) q_\ell^{t_f-1} \\
&= \sum_{i \in C \setminus \ell} \cost(i) q_i^{t_f}+ \cost(\ell) q_\ell^{t_f-1} \\
&= \sum_{i \in C \setminus \ell} \cost(i) P_i + \cost(\ell) q_\ell^{t_f-1}.
\end{align*}
Here we used the fact that the update rule only changed the value of index~$\ell$ in round~$t_f-1$, and thus $P_i = q_i^{t_f} = q_i^{t_f-1}$ for all~$i \in C \setminus \ell$.
Recall $W=\{i\in C \ | \ P_i=1\}$.
If $q_\ell^{t_f}=1$, then $W$ satisfies \BBOne since $\cost(W) \geq B$ and removing~$\ell$ makes it under budget.
Similarly, if $q_\ell^{t_f} = 0$, then $\cost(W \cup \{\ell\}) \geq B \geq \cost(W)$.
Thus, we have that~$W$ satisfies \BBOne with probability one.
\end{proof}

We have now established \Cref{thm:PB:Gandhi-et-al}.
\end{proof}

Note that there is a lottery associated with the random process described in \Cref{thm:PB:Gandhi-et-al}, but we only return an integral outcome sampled from this underlying lottery as it may be exponential in size.
By~\ref{enum:marginal-distribution}, the underlying lottery implements~$\vec{p}$, and by~\ref{enum:degree-preservation}, it satisfies ex-post \BBOne.
We remark that \Cref{thm:PB:Gandhi-et-al} is a consequence of applying the dependent rounding scheme of \citet{GKPS06} to our setting.

One might wonder whether we can further strengthen the ex-post budget feasibility axiom \BBOne.
A natural strengthening is the following:

\begin{definition}[\BFX]
An integral outcome~$W$ is said to be \emph{budget feasible up to any project (\BFX)} if for all~$c \in W$, $\cost(W \setminus \{c\}) \leq B$.
\end{definition}

It is worth noting that \BFX is weaker than the natural ``up to any'' strengthening of \BBOne.
In particular, \BFX only bounds the amount an outcome can exceed the budget, and places no restriction on outcomes which are under budget.
We, however, show that not all fractional outcomes may be implemented by ex-post \BFX lotteries.

\begin{proposition}
\label{prop:BFx}
There exists some fractional outcome~$\vec{p}$ that cannot be implemented by a lottery that is ex-post \BFX.
\end{proposition}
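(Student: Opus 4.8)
The plan is to exhibit a small, explicit instance together with a feasible fractional outcome whose every implementation is forced to contain a non-\BFX integral outcome. The guiding intuition is that \BFX bounds the amount by which an outcome may exceed~$B$ by the cost of its \emph{cheapest} selected project, so an outcome consisting of one budget-sized project together with even a single tiny project already fails \BFX; the construction forces exactly such an outcome into the support of any implementation.

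Concretely, I would take $C=\{a,b_1,b_2\}$ with $\cost(a)=B$ and $\cost(b_1)=\cost(b_2)=\epsilon$ for a small $\epsilon\in(0,B/2]$ (e.g.\ $B=1$, $\epsilon=1/3$); one checks the standing assumptions $\cost(c)\le B$ for all~$c$ and $\cost(C)\ge B$. As the target fractional outcome I would take $\vec p$ with $p_{b_1}=p_{b_2}=1$ and $p_a=1-2\epsilon/B$, which is feasible since $\cost(\vec p)=(1-2\epsilon/B)\,B+2\epsilon=B$.

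The argument then runs as follows. Suppose $\{(\lambda_j,W_j)\}_{j}$ implements $\vec p$ and is ex-post \BFX. From $p_{b_1}=p_{b_2}=1$ and $\sum_j\lambda_j=1$, every $W_j$ with $\lambda_j>0$ must contain both $b_1$ and $b_2$. Since $p_a=1-2\epsilon/B>0$, some $W_{j^\star}$ with $\lambda_{j^\star}>0$ contains~$a$ as well, and as $|C|=3$ this forces $W_{j^\star}=\{a,b_1,b_2\}$. But $\cost(W_{j^\star}\setminus\{b_1\})=\cost(a)+\epsilon=B+\epsilon>B$, so $W_{j^\star}$ is not \BFX --- a contradiction, which establishes the proposition.

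The only real thing to get right is the choice that makes ``removing any one project still leaves the outcome over budget'' unavoidable: this is why $a$ must be essentially as expensive as the entire budget (so that $a$ together with any one cheap project overshoots) and why two cheap projects are needed rather than one --- with a single cheap project the two-project outcome $\{a,b_1\}$ becomes \BFX, since deleting $b_1$ leaves only~$a$ with $\cost(a)=B\le B$. I would also remark that the $p=1$ device is merely a convenient way to pin projects into every outcome of the support, and that the example is robust to small perturbations of the marginals.
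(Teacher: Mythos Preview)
Your proof is correct; the only cosmetic slip is that the stated range $\epsilon\in(0,B/2]$ should be strict at the right endpoint (at $\epsilon=B/2$ one gets $p_a=0$ and nothing forces $a$ into any outcome), but your concrete example $\epsilon=1/3$ avoids this.

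Both your construction and the paper's use three projects and the $p=1$ device to pin projects into every outcome of the support, so the approaches are close cousins. The paper takes one cheap project~$a$ with $\cost(a)=\varepsilon$ and two projects of cost $B/2+\varepsilon$, pins~$a$, observes that the full set is not \BFX (removing~$a$ leaves cost $B+2\varepsilon$), and then finishes with a cost-averaging argument: every remaining outcome containing~$a$ has cost strictly below~$B$, so no convex combination can have expected cost~$B$. You instead pin the two cheap projects and use the positive marginal on the expensive one to force the full set directly into the support, which sidesteps the averaging step entirely. The paper's route incidentally shows that \emph{every} outcome in any \BFX implementation of its $\vec p$ would be strictly under budget, a marginally stronger structural observation; for the proposition as stated, your argument is the more economical of the two.
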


\begin{proof}
    Consider an instance with budget~$B$ and three projects~$C = \{a,b,c\}$ such that $\cost(a) = \varepsilon$ and $\cost(b) = \cost(c) = \frac{B}{2} + \varepsilon$.
    Consider the fractional outcome~$\vec{p} = \left( 1, \frac{B - \varepsilon}{B + 2 \varepsilon}, \frac{B - \varepsilon}{B + 2 \varepsilon} \right)$.
    We now show that~$\vec{p}$ cannot be implemented by ex-post \BFX integral outcomes.
    First, since $p_1 = 1$, project~$a$ needs to be included in every integral outcome.
    Next, the integral outcome~$C=\{a,b,c\}$ is not \BFX because $\cost(C \setminus \{a\}) = B + 2 \varepsilon > B$.
    It means that the lottery can positive support only on integral outcomes~$\{a, b\},\{a, c\},\{a\},\{b\},\{c\}$. However, such a lottery cannot implement $\vec{p}$ as $\cost(\vec{p})=B$ and every integral outcome in the support of the lottery has cost strictly less than~$B$.
\end{proof}

\paragraph{Handling Hard Constraints}
We note that, in addition to being well-suited to scenarios in which budget constraints have some flexibility, the implementation techniques introduced in this section are also relevant to settings with hard ex-post budget constraints.
To see this, consider a problem wherein every ex-post outcome is restricted to having a cost of at most~$B$.
If we now apply \Cref{thm:PB:Gandhi-et-al} to any fractional outcome that spends $B' = B - \max_{g \in C} \cost(g)$, the resulting implementation has the property that every integral outcome in its support has cost at most~$B$.

\section{Ex-ante Fairness Concepts}
\label{sec:PB:ex-ante}

We now present ex-ante axioms for the PB setting inspired by the fair share axioms first introduced by \citet{BMS05a}.
The fair share axioms were recently extended by \citet{ALS+23} to the committee voting setting, the special case of our own model in which each project is of unit cost and voters have binary utilities.
In that work, \citet{ALS+23} highlighted two alternative interpretations of the idea behind fair share, and defined axiom hierarchies for each of these interpretations.
These two interpretations are given as follows:
\begin{enumerate}[label=(\alph*)]
\item\label{enum:fair-share}
Fair share: each voter is given~$1/n$ probability to choose their favourite outcome, or

\item\label{enum:strong-fair-share}
Strong fair share: each voter can select~$1/n$ of the outcome.
\end{enumerate}

In this section, we generalize the (strong) fair share axiom hierarchies to the general PB setting, with the intention of formulating axioms which (i) collapse to those defined by \citet{ALS+23} in approval-based committee voting, and (ii) reflect their respective interpretations as detailed above.
Each of the axioms given in this section provides lower bounds on utilities derived from fractional outcomes.
We note that these utilities can also be interpreted as expected utilities from implementations of these fractional outcomes.
In particular, if $\Delta = (\lambda_j, W_j)_{j \in [r]}$ is an implementation of a fractional outcome~$\vec{p}$, then $\E_{W \sim \Delta}[u_i(W)] = u_i(\vec{p})$.

The first axiom in our hierarchy, \emph{individual fair share (IFS)}, guarantees each agent receives utility which is at least a $1/n$-fraction of the utility they receive from their favourite fractional outcome.

\begin{definition}[IFS]
\label{def:PB:IFS}
A fractional outcome~$\vec{p}$ is said to provide \emph{IFS} if for each~$i \in N$,
\[
u_i(\vec{p}) \geq \frac{1}{n} \cdot \max_{\vec{t} \in \feasibleFracOutcomes{}} u_i(\vec{t}).
\]
\end{definition}

Note that the quantity expressed by the $\max$-operator is the utility-maximizing fractional outcome for the agent $i$, and hence it is immediately clear that \Cref{def:PB:IFS} follows interpretation~\ref{enum:fair-share}.
In general, this can be computed by selecting projects in order of descending utility per cost.
For binary utilities, this means selecting approved projects in order of ascending cost.
We can already observe that, in our setting, IFS seems quite a bit more demanding than its approval-based committee voting counterpart, in which a project/candidate can only take on a utility per cost value of one or zero. In contrast, in the PB setting, each voter-project pair could result in a unique utility per cost value.

\begin{definition}[Strong IFS]
A fractional outcome~$\vec{p}$ is said to provide \emph{Strong IFS} if for each~$i \in N$,
\[
u_i(\vec{p}) \geq \max_{\vec{t} \in \feasibleFracOutcomes{\frac{B}{n}}} u_i(\vec{t}).
\]
\end{definition}

For Strong IFS, keeping with interpretation \ref{enum:strong-fair-share} above, an agent's utility lower bound is given by the optimal utility they could achieve if given their proportion of the budget.
Next, we strengthen IFS to \emph{unanimous fair share (UFS)}, which strengthens the fair share utility guarantee for groups of voters with identical preferences.

\begin{definition}[UFS]
A fractional outcome~$\vec{p}$ is said to provide \emph{UFS} if for any~$S \subseteq N$ where $u_i = u_j$ for any~$i, j \in S$, then the following holds for each~$i \in S$:
\[
u_i(\vec{p}) \geq \frac{|S|}{n} \cdot \max_{\vec{t} \in \feasibleFracOutcomes{}} u_i(\vec{t}).
\]
\end{definition}

\begin{definition}[Strong UFS]
\label{def:PB:sUFS}
A fractional outcome~$\vec{p}$ is said to provide \emph{Strong UFS} if for any~$S \subseteq N$ where $u_i = u_j$ for any~$i, j \in S$, the following holds for each~$i \in S$:
\begin{equation}
\label{eq:PB:sUFS-guarantee}
u_i(\vec{p}) \geq \max_{\vec{t} \in \feasibleFracOutcomes{|S| \cdot \frac{B}{n}}} u_i(\vec{t}).
\end{equation}
\end{definition}

As its name suggests, Strong UFS implies UFS (and hence Strong IFS implies IFS).\footnote{To see this, let $\vec{q}=\argmax_{\vec{t} \in \feasibleFracOutcomes{}} u_i(\vec{t})$. Now simply note that $\frac{|S|}{n} \vec{q} \in \feasibleFracOutcomes{|S| \cdot \frac{B}{n}}$.}
While (Strong) UFS gives a utility guarantee to groups of voters with identical preferences, our next axiom --- \emph{group fair share (GFS)} --- extends a non-trivial representation guarantee to all groups of voters.

\begin{definition}[GFS]
\label{def:gfs}
A fractional outcome~$\vec{p}$ is said to provide \emph{GFS} if the following holds for any~$S \subseteq N$:
\[
\sum_{j \in C} \Large( p_j \cdot \max_{i \in S} u_{ij} \Large) \geq \frac{1}{n} \cdot \sum_{i \in S} \max_{\vec{t} \in \feasibleFracOutcomes{}} u_i(\vec{t}).
\]
\end{definition}

In committee voting, the LHS of GFS is simply the probability allocated to candidates in the union of the group of voters' approval sets.
Thus, while it is clear that our definition collapses to the GFS of \citet{ALS+23} in committee voting, ours is not the only formulation of the LHS of GFS which does so.
For example, instead of taking the maximum utility for each project~$j$ over all agents in~$S$, we could have instead taken the average or median (or minimum) utility over all agents in~$S$ with non-zero utility for project~$j$.
Of these options, our formulation results in the weakest definition.
Since, as we will see, this definition of GFS is not compatible with any of the ex-post fairness notions we consider in any PB setting, each of the results considering GFS in this paper would also hold for any stronger notions of GFS.

\paragraph{Fractional Random Dictator}
We now extend the well-known Random Dictator algorithm~\citep{BMS05a} to the computation of fractional PB outcomes.
The high-level idea of the algorithm is to compute the fractional outcome resulting from giving each agent $1/n$ probability to select their own favourite fractional outcome.
For an agent~$i \in N$, let~$X_i$ be the maximal set of projects which can be funded fully in order of maximum utility per cost and let~$g_i$ be the project with highest utility per cost in the remaining set of projects.
Also, denote the indicator function by~$\indicator{\cdot}$.
For each~$j \in C$, the \emph{Fractional Random Dictator} algorithm outputs the fractional outcome~$\vec{p}$ defined as follows:
\[
p_j = \frac{1}{n} \sum_{i \in N} \indicator{j \in X_i} + \indicator{j = g_i} \frac{B - \cost(X_i)}{\cost(j)}.
\]

\begin{theorem}
\label{thm:PB:general:RD-properties}
The Fractional Random Dictator algorithm computes an ex-ante GFS fractional outcome.
\end{theorem}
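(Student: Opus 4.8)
The plan is to observe that the Fractional Random Dictator outcome is literally the uniform average of the $n$ individually utility-maximizing feasible fractional outcomes, and then to exploit the ``max'' appearing on the left-hand side of GFS to lower-bound the contribution of each agent in $S$ by that agent's own optimal utility.

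First, for each agent $i \in N$, define the fractional outcome $\vec{t}^{(i)}$ by $t^{(i)}_j = \indicator{j \in X_i} + \indicator{j = g_i} \cdot \frac{B - \cost(X_i)}{\cost(j)}$, so that by construction $\vec{p} = \frac{1}{n}\sum_{i \in N} \vec{t}^{(i)}$. I would record two facts about $\vec{t}^{(i)}$: (i) it is feasible, since $\cost(\vec{t}^{(i)}) = \cost(X_i) + (B - \cost(X_i)) = B$; and (ii) it is optimal for $i$, i.e. $u_i(\vec{t}^{(i)}) = \max_{\vec{t} \in \feasibleFracOutcomes{}} u_i(\vec{t})$. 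Fact (ii) is exactly the optimality of the greedy (fractional-knapsack) rule that processes projects in order of utility-per-cost, fully funding the prefix $X_i$ and then partially funding $g_i$; this is the one place the specific definition of $X_i$ and $g_i$ is used, and I would cite it as standard or include the short exchange argument. As an immediate byproduct, $\vec{p}$ is a convex combination of feasible fractional outcomes and is therefore itself feasible, so it is a legitimate fractional outcome.

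Next, fix an arbitrary $S \subseteq N$ and expand the left-hand side of GFS by linearity, then discard the nonnegative terms with index outside $S$:
\[
\sum_{j \in C} p_j \max_{i \in S} u_{ij} \;=\; \frac{1}{n} \sum_{\ell \in N} \sum_{j \in C} t^{(\ell)}_j \max_{i \in S} u_{ij} \;\ge\; \frac{1}{n} \sum_{\ell \in S} \sum_{j \in C} t^{(\ell)}_j \max_{i \in S} u_{ij}.
\]
For each $\ell \in S$ we have $\max_{i \in S} u_{ij} \ge u_{\ell j}$ termwise, so $\sum_{j \in C} t^{(\ell)}_j \max_{i \in S} u_{ij} \ge \sum_{j \in C} t^{(\ell)}_j u_{\ell j} = u_\ell(\vec{t}^{(\ell)}) = \max_{\vec{t} \in \feasibleFracOutcomes{}} u_\ell(\vec{t})$ by Fact (ii). Summing over $\ell \in S$ and dividing by $n$ gives precisely the right-hand side of GFS, which completes the argument.

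There is no serious obstacle here: once the decomposition $\vec{p} = \frac{1}{n}\sum_i \vec{t}^{(i)}$ is in place the inequality is two lines. The only points needing a sentence of care are the optimality of the greedy rule for the fractional problem (Fact (ii)), and the degenerate cases in the definitions of $X_i$ and $g_i$---for instance when an agent's optimal bundle already exhausts the budget with fully funded projects (so $\cost(X_i) = B$ and the $g_i$ term is vacuous), or when the budget is large enough to fund all of the agent's approved projects---where one should confirm $\vec{t}^{(i)}$ is still well-defined, feasible, and optimal so that the chain of inequalities above goes through unchanged.
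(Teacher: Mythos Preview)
Your proposal is correct and follows essentially the same approach as the paper: both decompose $\vec{p}$ as the average of the $n$ individually optimal fractional outcomes, drop the nonnegative terms indexed by $N\setminus S$, and then lower-bound $\max_{i\in S} u_{ij}$ by $u_{\ell j}$ for each $\ell\in S$ to recover $\max_{\vec{t}\in\feasibleFracOutcomes{}} u_\ell(\vec{t})$. Your version is a touch more explicit in naming the individual outcomes $\vec{t}^{(i)}$, citing the fractional-knapsack optimality, and flagging the degenerate cases, but the argument is the same.
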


\begin{proof}
Let $\vec{p}$ be the fractional outcome returned by the Fractional Random Dictator mechanism. We first show that $\vec{p}$ is a feasible fractional outcome:
\begin{align*}
\sum_{j\in C} p_j\cdot \cost(j)
= &\frac{1}{n} \sum_{j\in C} \cost(j) \sum_{i\in N} \left[\indicator{j\in X_i} + \indicator{j=g_i} \frac{B-\cost(X_i)}{\cost(j)}\right] \\
= &\frac{1}{n} \sum_{i\in N}  \sum_{j\in C} \cost(j) \left[\indicator{j\in X_i} + \indicator{j=g_i} \frac{B-\cost(X_i)}{\cost(j)}\right] \\
= &\frac{1}{n} \sum_{i\in N}  \cost(X_i) + \cost(g_i) \frac{B-\cost(X_i)}{\cost(g_i)}
= B.
\end{align*}

Lastly, we show that $\vec{p}$ satisfies ex-ante GFS:
\begin{align*}
\sum_{j \in C} p_j \cdot \max_{i \in S} u_{ij}
= &\frac{1}{n} \sum_{j \in C} \max_{i \in S} u_{ij} \left[ \sum_{i\in N} \indicator{j\in X_i} + \indicator{j=g_i} \frac{B-\cost(X_i)}{\cost(j)} \right] \\
\geq &\frac{1}{n} \sum_{j \in C} \left[ \sum_{i\in S} u_{ij} \cdot \indicator{j\in X_i} + u_{ij} \cdot \indicator{j=g_i} \frac{B-\cost(X_i)}{\cost(j)} \right] \\
\geq &\frac{1}{n} \sum_{i\in S} \sum_{j \in C} u_{ij} \cdot \indicator{j\in X_i} + u_{ij} \cdot \indicator{j=g_i} \frac{B-\cost(X_i)}{\cost(j)}  \\
\geq &\frac{1}{n} \sum_{i\in S} \left[ \sum_{j \in X_i} u_{ij} + u_{ig_i} \frac{B-\cost(X_i)}{\cost(g_i)} \right]  \\
= &\frac{1}{n} \cdot \sum_{i \in S} \max_{\vec{t} \in \feasibleFracOutcomes{}} u_i(\vec{t}).
\end{align*}

The last inequality follows since the quantity in brackets is exactly equal to the utility agent $i$ receives from the fractional outcome they select in the Fractional Random Dictator mechanism, which is their optimal fractional outcome.
\end{proof}

\section{BoBW Fairness in PB with Binary Utilities}
\label{sec:PB:approval}

In this section, we consider the setting of PB with binary utilities, that is, the voters have binary utilities while the projects have arbitrary costs.
Our main focus is to investigate whether the ex-ante fair share notions defined in \Cref{sec:PB:ex-ante} can be achieved simultaneously with ex-post fairness properties like \emph{justified representation (JR)}, \emph{extended justified representation (EJR)} and \emph{full justified representation (FJR)}~\citep{PPS21a}.

\begin{definition}[$T$-cohesive group for PB with binary utilities]
\label{def:PB:approval:T-cohesive}
A group of voters~$S \subseteq N$ is said to be \emph{$T$-cohesive} for $T \subseteq C$ if $|S| \cdot \frac{B}{n} \geq \cost(T)$ and $T \subseteq \bigcap_{i \in S} \approvalBallotOfAgent{i}$.
\end{definition}

\begin{definition}[JR \& \EJR for PB with binary utilities]
An outcome~$W$ is said to satisfy
\begin{itemize}
\item \emph{justified representation (JR)} if for each~$j \in C$ and every $\{j\}$-cohesive group of voters~$S \subseteq N$, it holds that $u_i(W) = |\approvalBallotOfAgent{i} \cap W| \geq 1$ for some~$i \in S$; and

\item \emph{extended justified representation (EJR)} if for each~$T \subseteq C$ and every $T$-cohesive group of voters~$S \subseteq N$, it holds that $u_i(W) = |\approvalBallotOfAgent{i} \cap W| \geq |T|$ for some~$i \in S$.
\end{itemize}
\end{definition}

By definition, \EJR implies JR.
We next introduce a notion stronger than \EJR.

\begin{definition}[\FJR for PB with binary utilities]
Given a positive integer~$\beta$ and a set of projects~$T \subseteq C$, a group of voters~$S \subseteq N$ is said to be \emph{weakly $(\beta, T)$-cohesive} if $|S| \cdot \frac{B}{n} \geq \cost(T)$ and $|\approvalBallotOfAgent{i} \cap T| \geq \beta$ for all~$i \in S$.

An outcome~$W$ is said to satisfy \emph{full justified representation (FJR)} if for every weakly $(\beta, T)$-cohesive group of voters~$S \subseteq N$, it holds that $|\approvalBallotOfAgent{i} \cap W| \geq \beta$ for some~$i \in S$.
\end{definition}

The remainder of this section is organized as follows:
\begin{itemize}
\item In \Cref{sec:PB:approval:impossibility:GFS+JR}, we show that it is impossible to simultaneously achieve ex-ante GFS and ex-post JR.
\item In \Cref{sec:PB:approval:SUFS+FJR}, we show constructively that ex-ante Strong UFS and ex-post \FJR are compatible, though our randomized algorithm is not polynomial time.
\item In \Cref{sec:PB:approval:SUFS+EJR}, we devise a polynomial-time randomized algorithm which simultaneously achieves ex-ante Strong UFS and ex-post \EJR.
\end{itemize}

\subsection{Impossibility: Ex-ante GFS + Ex-post JR}
\label{sec:PB:approval:impossibility:GFS+JR}

Our first main result in PB with binary utilities states that it is impossible to simultaneously achieve ex-ante GFS and ex-post JR.
Note that in the more restricted setting with unit-cost projects (i.e., approval-based committee voting), \citet{ALS+23} showed that ex-ante GFS is compatible even with ex-post EJR.
Our impossibility result demonstrates a clear and strong separation between PB with binary utilities and approval-based committee voting.

\begin{theorem}
\label{thm:PB:approval:impossibility:GFS+JR}
In PB with binary utilities, ex-ante GFS and ex-post JR are incompatible.
\end{theorem}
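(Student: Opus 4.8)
The plan is to exhibit a single PB instance with binary utilities in which no fractional outcome can simultaneously be ex-ante GFS and have all integral outcomes in its support satisfy ex-post JR. The key tension to exploit is that GFS forces a substantial amount of probability mass onto a small set of ``popular'' projects (to give large groups their fair share), while JR, applied to cohesive groups formed around cheap projects, forces every integral outcome in the support to contain at least one project from each such group's approval set. If the budget and costs are chosen so that a JR-feasible integral outcome must spend almost all of its budget on these ``cheap, JR-mandated'' projects, then there is essentially no budget room left — ex post — for the expensive popular projects that GFS demands be funded with high marginal probability, giving a contradiction.

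Concretely, I would first design the instance: take one or a few expensive projects of cost close to $B$ that are approved by a large bloc of voters, and several cheap projects (say of cost $B/n$ or slightly more) each approved by its own small $\{j\}$-cohesive group. I would compute, for the large bloc, the lower bound that GFS imposes on $\sum_j p_j \max_{i\in S} u_{ij}$; since their approval set is (essentially) the expensive project(s), this forces $\sum_{c \text{ expensive}} p_c$ to be at least some constant bounded away from $0$. By linearity / averaging over the lottery $\Delta$ implementing $\vec p$, this means a positive-probability set of integral outcomes $W$ must each contain at least one expensive project. The second step is to show that any such $W$ — containing an expensive project and, by JR, at least one approved project for each of the cheap cohesive groups — has $\cost(W) > B$, i.e.\ is infeasible, contradicting that a lottery only places support on feasible integral outcomes. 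Tuning the number of cheap groups and the exact costs so the arithmetic $\cost(\text{expensive}) + \sum(\text{cheap, one per group}) > B$ holds is the crux.

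The step I expect to be the main obstacle is the simultaneous calibration: I need the GFS bound for the large bloc to be strong enough that at least one expensive project is funded ex post with positive probability, yet I need enough cheap cohesive groups that JR alone already consumes the whole budget, and I need these cheap groups to be genuinely $\{j\}$-cohesive (so $|S_j| \cdot B/n \ge \cost(c_j)$) without making the total voter count so large that the large bloc loses its GFS leverage. Balancing $n$, the bloc size, the number of cheap projects, and the $\varepsilon$-perturbations of costs so that every inequality points the right way is the delicate part; I would likely parametrize and then pick small explicit numbers (e.g.\ a handful of voters and projects) to verify. A secondary subtlety is making sure GFS is applied to the right group $S$ — possibly the whole electorate or the bloc together with representatives of the cheap groups — so that the right-hand side $\frac1n\sum_{i\in S}\max_{\vec t} u_i(\vec t)$ is large enough to force the expensive project's marginal up, while the left-hand side is still controlled by $\sum_{c\text{ expensive}}p_c$ plus the (bounded) contribution of the cheap projects.
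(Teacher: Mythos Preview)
Your proposal contains a genuine gap. The contradiction you aim for in the second step --- ``$\cost(W) > B$, i.e.\ is infeasible, contradicting that a lottery only places support on feasible integral outcomes'' --- rests on an assumption that is \emph{false} in this paper's framework. Lotteries here are only required to implement a feasible \emph{fractional} outcome (so $\E[\cost(W)] = B$); the integral outcomes in the support are \emph{not} required to satisfy $\cost(W) \leq B$. Indeed, the whole point of the \BBOne discussion earlier in the paper is that ex-post budget feasibility generally cannot be guaranteed. So showing that some $W$ in the support overshoots the budget is no contradiction at all.

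One might hope to patch this by replacing the ``each $W$ is feasible'' step with an expected-cost argument, but your construction does not obviously survive that patch. In the natural instantiation (a bloc of size $|S|$ approving one project of cost $\approx B$, plus $k = n - |S|$ singleton voters each approving one project of cost $B/n$), GFS only gives $p_{\text{exp}} \geq |S|/n$, while ex-post JR forces $p_{c_j} = 1$ for each cheap project; feasibility of $\vec{p}$ then yields $p_{\text{exp}} = 1 - k/n = |S|/n$ exactly, and there is no contradiction. The calibration you flag as ``the main obstacle'' is not just delicate --- with the roles you assign (GFS forces the expensive project, JR forces the cheap ones) the inequalities balance out.

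The paper's proof inverts the roles. It builds an instance with one \emph{common} project $g^*$ (cost $B/2$) approved by everyone, and for each voter three private projects of cost $B/2 - \varepsilon$. Here \emph{JR} is what forces $g^*$ into the support: any JR outcome without $g^*$ must hit at least $n/2$ voters' private projects, hence has cost at least $\frac{n}{2}(B/2 - \varepsilon) \gg B$, which via the expected-cost constraint $\E[\cost(W)] = B$ forces $p_{g^*} > 0$. Then \emph{GFS}, applied to $S = N$, together with feasibility of $\vec{p}$, gives $\sum_c p_c = \frac{B - \varepsilon\, p_{g^*}}{B/2 - \varepsilon} \geq \frac{B}{B/2 - \varepsilon}$, which forces $p_{g^*} \leq 0$ --- a contradiction. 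The key ideas you are missing are (i) using the expected-cost identity rather than per-outcome feasibility, and (ii) letting JR push a cost-\emph{inefficient} common project into the lottery while GFS (which rewards utility-per-cost) forbids it.
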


\begin{proof}
Consider an instance with $n \geq 6$ and the following approval sets and project costs:
\begin{itemize}
\item each voter~$i \in N$ approves $\approvalBallotOfAgent{i} = \{g^*, a_i, b_i, c_i\}$ with $\cost(g^*) = \frac{B}{2}$ and $\cost(a_i) = \cost(b_i) = \cost(c_i) = \frac{B}{2} - \varepsilon$, where $\varepsilon < \frac{B}{2} - \frac{2B}{n}$;

\item note that $g^*$ is the common project approved by every voter, and for any pair of voters~$i \neq j$, $a_i, b_i, c_i \notin A_j$.
\end{itemize}

We establish the incompatibility using this instance by showing that \emph{any} feasible fractional outcome satisfying GFS cannot be implemented by \emph{any} lottery that is ex-post JR, even without imposing \BBOne.
Suppose, for the sake of contradiction, that $\{(\lambda_j, W_j)\}_{j \in [s]}$ is an ex-post JR lottery implementing GFS fractional outcome $\vec{p}$.

We first point out that some integral outcome in the lottery includes~$g^*$, and hence $p_{g^*} > 0$.

\begin{claim}
\label{claim:PB:approval:impossibility:GFS+JR}
There exists an outcome~$W_j$ such that $g^* \in W_j$.
\end{claim}

\begin{proof}[Proof of Claim]
Suppose for the sake of contradiction that every integral outcome does not contain~$g^*$.
Fix any outcome~$W_j$.
Consider the set of voters $N' = \{i \in N \mid \approvalBallotOfAgent{i} \cap W_j = \emptyset\}$.
Recall that for any pair of voters~$i \neq i'$, $a_i, b_i, c_i \notin \approvalBallotOfAgent{i'}$.
If $|W_j| < n/2$, then $|N'| \geq n/2$.
It means that~$W_j$ is not JR because every voter in the $\{g^*\}$-cohesive group~$N'$ gets zero utility.
Therefore, $|W_j| \geq n/2$.
Since $g^* \notin W_j$ and every other project has an identical cost of~$\frac{B}{2} - \varepsilon$, we have $\cost(W_j) \geq \frac{n}{2} \cdot \left( \frac{B}{2} - \varepsilon \right)$.
Moreover, as lottery~$\{(\lambda_j, W_j)\}_{j \in [s]}$ is an implementation of the feasible fractional outcome~$\vec{p}$, we have
\[
B = \sum_{j \in [s]} \lambda_j \cdot \cost(W_j) \geq \sum_{j \in [s]} \lambda_j \cdot \frac{n}{2} \cdot \left( \frac{B}{2} - \varepsilon \right),
\]
which implies $\sum_{j \in [s]} \lambda_j \leq \frac{2}{n} \cdot \frac{B}{B/2 - \varepsilon} < 1$, contradicting the assumption that $\{(\lambda_j, W_j)\}_{j \in [s]}$ is an implementation of~$\vec{p}$.
\end{proof}

Feasibility of~$\vec{p}$ means that $B = \sum_{c \in C} p_c \cdot \cost(c) = \sum_{c \in C \setminus \{g^*\}} p_c \cdot \left( \frac{B}{2} - \varepsilon \right) + p_{g^*} \cdot \frac{B}{2}$. Thus,
\[
\sum_{c \in C} p_c = \frac{B - B/2 \cdot p_{g^*}}{B/2 - \varepsilon} + p_{g^*} = \frac{B - \varepsilon \cdot p_{g^*}}{B/2 - \varepsilon}.
\]
Since $\vec{p}$ satisfies GFS with respect to voters~$N$, we thus have
\begin{align*}
\frac{B - \varepsilon \cdot p_{g^*}}{B/2 - \varepsilon}
= \sum_{c \in C} p_c
\geq \frac{1}{n} \cdot \sum_{i \in N} \max_{\vec{t} \in \feasibleFracOutcomes{}} u_i(\vec{t})
= \frac{1}{n} \cdot \sum_{i \in N} \frac{B}{B/2 - \varepsilon} = \frac{B}{B/2 - \varepsilon},
\end{align*}
a contradiction because~$p_{g^*} > 0$.
\end{proof}

As demonstrated by \citet{ALS+23} in the approval-based committee voting, there is no logical dependence between GFS and Strong UFS.
It is thus unclear whether ex-ante Strong UFS can be compatible with any ex-post fairness properties.
We answer the question in the affirmative below.

\subsection{Ex-ante Strong UFS + Ex-post FJR}
\label{sec:PB:approval:SUFS+FJR}

We now show that if we only focus on giving ex-ante fair share guarantees to \emph{unanimous} (instead of any) groups, ex-ante Strong UFS is compatible even with ex-post FJR.

\begin{theorem}
\label{thm:PB:approval:sUFS+FJR}
In PB with binary utilities, \Cref{alg:pb_bw_gcr} computes an integral outcome sampled from a lottery that is ex-ante Strong UFS, ex-post \BBOne and ex-post \FJR.
\end{theorem}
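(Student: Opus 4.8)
The plan is to verify the three guarantees in turn, resting everything on one structural observation: \FJR is upward closed, i.e.\ if $D \subseteq W$ and $D$ is \FJR then so is $W$, since any member of a weakly $(\beta,T)$-cohesive group witnessing the guarantee in $D$ still witnesses it in $W$ (and the cohesion condition $|S|\cdot B/n \ge \cost(T)$ does not depend on $W$). Accordingly, I expect the algorithm to (i) compute a single feasible fractional outcome $\vec p$ whose integral part $D^\star := \{c \in C : p_c = 1\}$ is itself an \FJR outcome and which also satisfies ex-ante Strong UFS, and then (ii) implement $\vec p$ by running the dependent-rounding process of \Cref{thm:PB:Gandhi-et-al} on it. Granting this, ex-post \BBOne is immediate from property \ref{enum:degree-preservation}; every project with $p_c=1$ is selected with probability one (such an index is never fractional in the rounding process), so every integral outcome in the support contains $D^\star$ and is therefore \FJR by upward closure; and since property \ref{enum:marginal-distribution} gives $\E_W[u_i(W)] = u_i(\vec p)$ for every voter $i$, ex-ante Strong UFS of the lottery reduces to Strong UFS of $\vec p$.

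For step~(i), the natural approach is to first run the Greedy Cohesive Rule to obtain an \FJR outcome $D^\star$ with $\cost(D^\star) \le B$ --- this is the source of the exponential running time, as finding maximum-strength weakly cohesive groups is not known to be tractable --- and then to ``top up'' $D^\star$: partition the voters into types by approval set, and for each type $\tau$ with $s_\tau$ voters and common approval set $A_\tau$, raise the fractions of the cheapest projects in $A_\tau \setminus D^\star$ until each voter $i$ of type $\tau$ receives total funding $u_i(\vec p) = \sum_{c \in A_\tau} p_c$ equal to the Strong UFS bound $\max_{\vec t \in \feasibleFracOutcomes{s_\tau \cdot \frac{B}{n}}} u_i(\vec t)$ of \Cref{def:PB:sUFS} (and not at all if $D^\star$ already suffices); projects demanded by several types are funded once at the largest demanded level, and any leftover budget is spent arbitrarily so that $\cost(\vec p) = B$. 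By construction $\vec p \ge \vec 1_{D^\star}$ and $\vec p$ meets the Strong UFS inequality for every unanimous group, so the only remaining point is that $\vec p$ is feasible, i.e.\ the top-up does not push $\cost(\vec p)$ above $B$.

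This feasibility check is where I expect the real work to lie. The easy bound --- each type's full Strong UFS bundle lies in $\feasibleFracOutcomes{s_\tau \cdot \frac{B}{n}}$, and $\sum_\tau s_\tau \cdot \frac{B}{n} = B$ --- is not enough by itself, because $D^\star$ may be expensive, so the \FJR core and the top-ups must genuinely share the budget rather than each claiming its own slice. The argument I would give exploits that the Greedy Cohesive Rule already fully funds a witnessing bundle for every type that is cohesive enough to demand one integrally, so that the residual demand of type $\tau$ is at most one unit of utility, concerning only the partially funded ``last'' project of its optimal bundle; one then charges the residual cost of each type against the portion of that type's budget slice $s_\tau \cdot \frac{B}{n}$ not already consumed by $D^\star \cap A_\tau$, and sums, using $\sum_\tau s_\tau \cdot \frac{B}{n} = B$ to conclude $\cost(\vec p) \le B$ (with equality in the tight instances). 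Once $\vec p$ is shown feasible, applying \Cref{thm:PB:Gandhi-et-al} to it and returning the sampled outcome delivers, simultaneously, ex-ante Strong UFS, ex-post \BBOne, and ex-post \FJR, as argued in the first paragraph.
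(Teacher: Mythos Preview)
Your high-level plan matches the paper's exactly: run \GCR to obtain an \FJR outcome $W_\GCR$, set $\vec p \ge \vec 1_{W_\GCR}$, top it up to meet Strong UFS, then apply \Cref{thm:PB:Gandhi-et-al}. The monotonicity of \FJR, the fact that indices with $p_c=1$ are never touched by the rounding, and the reduction of ex-ante Strong UFS to Strong UFS of $\vec p$ are all correct, and you have rightly located the crux in the feasibility of $\vec p$.

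The gap is in your feasibility sketch. The charging you propose --- ``residual cost of each type against the portion of that type's budget slice $s_\tau \cdot \tfrac{B}{n}$ not already consumed by $D^\star \cap A_\tau$'' --- does not close. For a type $\tau$ needing top-up one has $|D^\star\cap A_\tau|=|G_\tau|$, and since $G_\tau$ consists of the $|G_\tau|$ \emph{cheapest} projects in $A_\tau$, necessarily $\cost(D^\star\cap A_\tau)\ge\cost(G_\tau)$. Thus the residual $s_\tau\cdot\tfrac{B}{n}-\cost(G_\tau)$ can \emph{exceed} the ``unused'' slot $s_\tau\cdot\tfrac{B}{n}-\cost(D^\star\cap A_\tau)$ you are charging it to; and even if it did not, $\sum_\tau\cost(D^\star\cap A_\tau)$ double-counts projects approved by several types and need not equal $\cost(D^\star)$. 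A per-type accounting simply cannot see how \GCR actually spent the budget.

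What the paper does instead is account at the level of \GCR steps. The key technical ingredient (\Cref{claim:PB:approval:cost-comparison}) is that if a unanimous subgroup $N_j^z\subseteq N_j$ of the group selected at step $j$ still needs top-up, then $\cost(T_j)\le\cost(G_{N_j^z})$; this hinges on \GCR's tie-break in favour of smaller $\cost(T)$ after maximising $\beta$ --- otherwise \GCR would have chosen the still-available portion of $G_{N_j^z}$ instead of $T_j$ at step $j$. With that claim one gets, for each step $j$, $\sum_{z}\bigl(|N_j^z|\cdot\tfrac{B}{n}-\cost(G_{N_j^z})\bigr)\le |N_j|\cdot\tfrac{B}{n}-\cost(T_j)$, and summing over $j$ (plus $b_i\le\tfrac{B}{n}$ for voters never selected by \GCR) yields $\sum_i b_i\le B-\cost(W_\GCR)$ (\Cref{lem:PB:approval:BW-GCR-PB:valid-budgets}). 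This step-level inequality, driven by the \GCR cost tie-break, is the piece missing from your proposal.
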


\subsubsection{The Algorithm: \BWGCRforPB}

\begin{algorithm}[t]
\caption{\BWGCRforPB: Strong UFS and FJR}
\label{alg:pb_bw_gcr}
\DontPrintSemicolon

\KwIn{Voters~$N = [n]$, projects~$C = [m]$, cost function~$\cost$, budget~$B$, and utilities~$(\vec{u}_i)_{i \in N}$.}

$W_\GCR \gets \GCR(N, C, \cost, B, (u_i)_{i \in N})$\; \label{alg:pb_bw_gcr:GCR}
$\vec{p} \gets \vec{1}_{W_\GCR}$\;

$\widetilde{N} \gets \emptyset$\;
$b_i \gets 0$ for all~$i \in N$\;

Let $\{N^1, \dots, N^\eta\}$ be the unanimous groups of~$N$.\;
\ForEach{$z \in [\eta]$}{
	\If{$|\approvalBallotOfAgent{N^z} \cap W_\GCR| = |G_{N^z}|$}{
		$\widetilde{N} \gets \widetilde{N} \cup N^z$\;
		$b_i \gets \frac{B}{n} - \frac{1}{|N^z|} \cdot \cost(G_{N^z})$ for all~$i \in N^z$\; \label{alg:pb_bw_gcr:b_i}
		Let voters~$N^z$ spend their total budget $|N^z| \cdot \frac{B}{n} - \cost(G_{N^z})$ on project~$c \in \approvalBallotOfAgent{N^z}$ with the smallest cost, provided the updated~$p_c \leq 1$.\;
	}
}

Increase~$\vec{p}$ arbitrarily such that for all~$c \in C$, $p_c \leq 1$ and $\cost(\vec{p}) = B$.\; \label{alg:pb_bw_gcr:feasibility}

Obtain an outcome~$W$ sampled from the lottery implementing~$\vec{p}$ by applying \Cref{thm:PB:Gandhi-et-al}.\;

\Return{$\vec{p}$ and~$W$}
\end{algorithm}

Our algorithm, whose pseudocode can be found in \Cref{alg:pb_bw_gcr}, starts by feeding the given PB instance into the \emph{Greedy Cohesive Rule (\GCR)} of \citet{PPS21a} and obtains an \FJR outcome.
More specifically, \GCR begins by making all voters as \emph{active} and initializing~$W = \emptyset$.
In each step, \GCR searches for a set of voters~$N' \subseteq N$ who are all active and a set of projects~$T \subseteq C \setminus W$ such that~$N'$ is weakly $(\beta, T)$-cohesive, breaking ties in favour of larger~$\beta$, next smaller~$\cost(T)$, and then larger~$|N'|$.
\GCR then includes projects~$T$ to~$W$ and labels voters~$N'$ as inactive.
If, at any step, no weakly $(\beta, T)$-cohesive group exists for any positive integer~$\beta$, then \GCR returns~$W$ and terminates.
Denote by~$r$ the number of steps \GCR executes before terminating.
For each~$j \in [r]$, we refer to~$\beta_j$, $T_j$ and $N_j$ as the values of~$\beta$, $T$ and~$N'$ for the weakly cohesive group selected in the $j$-th step of \GCR.
Denote by~$W_\GCR \coloneqq \bigcup_{j \in [r]} T_j$ and initialize~$\vec{p}$ as $\vec{1}_{W_\GCR}$.

\Cref{alg:pb_bw_gcr} next loops over unanimous groups and set budgets for the voters.
We first introduce additional notation for each unanimous group.
Fix any unanimous group~$S \subseteq N$.
We denote by~$\approvalBallotOfAgent{S}$ the approval set of the unanimous group~$S$ (i.e., for all~$i \in S$, $\approvalBallotOfAgent{i} = \approvalBallotOfAgent{S}$).
Let us rename the projects in~$A_S$ in non-decreasing order of cost with arbitrary tie-breaking, i.e., $\cost(g_1) \leq \cost(g_2) \leq \dots \leq \cost(g_{|A_S|})$.
Denote by $G_S \coloneqq \{g_1, g_2, \dots, g_{\kappa_S}\}$ the maximal set of projects such that $\cost(G_S) \leq |S| \cdot \frac{B}{n}$.
Put differently, if $A_S \setminus G_S \neq \emptyset$, $\cost(G_S \cup \{g_{\kappa_S+1}\}) > |S| \cdot \frac{B}{n}$.

For ease of expression, let $\{N^1, N^2, \dots, N^\eta\}$ be the partition of the \emph{(maximal) unanimous groups} of voters~$N$, i.e., for each~$z \in [\eta]$, voters~$N^z$ are unanimous and for any~$i \in N^z$ and~$i' \in N^{z'}$ with~$z \neq z'$, $\approvalBallotOfAgent{i} \neq \approvalBallotOfAgent{i'}$.
Fix any~$z \in [\eta]$.
If voter~$i \in N^z$ gets utility exactly~$|G_{N^z}|$ from~$W_\GCR$, i.e., the \verb|if|-condition holds, we set budget~$b_i \coloneqq \frac{B}{n} - \frac{1}{|N^z|} \cdot \cost(G_{N^z})$.
The unanimous group of voters~$N^z$ then spend their total budget of $|N^z| \cdot \frac{B}{n} - \cost(G_{N^z})$ on project~$c \in \approvalBallotOfAgent{N^z}$ with the smallest cost, provided the updated~$p_c \leq 1$.
We will show shortly that the budget set-up is valid (\Cref{lem:PB:approval:BW-GCR-PB:valid-budgets}) and guarantee that each unanimous group satisfies Strong UFS (\Cref{lem:PB:approval:BW-GCR-PB:sUFS}).
\Cref{alg:pb_bw_gcr:feasibility} then increases~$\vec{p}$ in an arbitrary way so that~$\vec{p}$ is feasible, that is, for all~$c \in C$, $p_c \leq 1$ and $\cost(\vec{p}) = B$.

Finally, given the feasible fractional outcome~$\vec{p}$, we apply the randomized rounding scheme (\Cref{thm:PB:Gandhi-et-al}) and sample an outcome from the lottery implementing~$\vec{p}$.

\subsubsection{The Analysis of \BWGCRforPB}

We begin by stating two key lemmas in the proof of \Cref{thm:PB:approval:sUFS+FJR}.
First, we prove that the total budgets we give the voters in \cref{alg:pb_bw_gcr:b_i} is upper bounded by the leftover budget limit after selecting~$W_\GCR$.

\begin{lemma}
\label{lem:PB:approval:BW-GCR-PB:valid-budgets}
$\sum_{i \in N} b_i \leq B - \cost(W_\GCR)$.
\end{lemma}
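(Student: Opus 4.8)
The plan is to account for the budget consumed by the \GCR phase and the budget handed out to unanimous groups separately, and then show their sum does not exceed $B$. Recall that $\cost(W_\GCR) = \sum_{j \in [r]} \cost(T_j)$ where, in step $j$, the selected group $N_j$ is weakly $(\beta_j, T_j)$-cohesive, so $\cost(T_j) \le |N_j| \cdot \frac{B}{n}$. Since \GCR deactivates the voters $N_j$ after step $j$, the groups $N_1, \dots, N_r$ are pairwise disjoint subsets of $N$; hence $\cost(W_\GCR) \le \sum_{j \in [r]} |N_j| \cdot \frac{B}{n} \le |N| \cdot \frac{B}{n} = B$ — but this crude bound is not tight enough on its own, so I need to be more careful and relate the per-voter budgets $b_i$ directly to the slack left by \GCR on a voter-by-voter basis.

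The key observation is that $b_i$ is nonzero only for voters $i$ in a unanimous group $N^z$ that passed the \verb|if|-test, i.e., $|\approvalBallotOfAgent{N^z} \cap W_\GCR| = |G_{N^z}|$, and for such a voter $b_i = \frac{B}{n} - \frac{1}{|N^z|}\cost(G_{N^z})$. First I would argue this is nonnegative: by definition $G_{N^z}$ is the maximal prefix of $A_{N^z}$ (projects sorted by increasing cost) with $\cost(G_{N^z}) \le |N^z| \cdot \frac{B}{n}$, so $\frac{1}{|N^z|}\cost(G_{N^z}) \le \frac{B}{n}$, giving $b_i \ge 0$. Summing over the group, $\sum_{i \in N^z} b_i = |N^z|\cdot\frac{B}{n} - \cost(G_{N^z})$. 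So $\sum_{i \in N} b_i = \sum_{z : \text{if-test passes}} \big(|N^z|\cdot\frac{B}{n} - \cost(G_{N^z})\big)$, and I must show this is at most $B - \cost(W_\GCR)$.

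Rewriting, it suffices to show $\sum_{z}\big(|N^z|\cdot\frac{B}{n}\big) - \sum_z \cost(G_{N^z}) + \cost(W_\GCR) \le B$, where the sums range over the passing groups $z$. Since $\sum_z |N^z| \le n$, the first term is at most $B$, so it would be enough to show $\cost(W_\GCR) \le \sum_z \cost(G_{N^z})$. This is where the \verb|if|-condition does the work: for each passing group $N^z$, the condition $|\approvalBallotOfAgent{N^z} \cap W_\GCR| = |G_{N^z}|$ together with the fact that $G_{N^z}$ consists of the $|G_{N^z}|$ cheapest projects in $A_{N^z}$ means $\cost(A_{N^z} \cap W_\GCR) \ge \cost(G_{N^z})$ (the outcome picks at least as many $A_{N^z}$-projects as $G_{N^z}$ has, and $G_{N^z}$ is the cheapest such collection). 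Now I need that the sets $A_{N^z} \cap W_\GCR$ for distinct passing groups are disjoint and cover enough of $W_\GCR$ — disjointness holds because the $N^z$ are maximal unanimous groups with distinct approval sets, but two distinct groups could still approve a common project $g^*$, so the $A_{N^z}$ need not be disjoint. I expect this overlap to be the main obstacle: the cleanest fix is to observe that every project in $W_\GCR$ lies in $T_j$ for some \GCR step $j$, and $T_j \subseteq \bigcap_{i \in N_j}\approvalBallotOfAgent{i} = \approvalBallotOfAgent{N_j}$ (for JR-style cohesive sets the projects are commonly approved), so $W_\GCR \subseteq \bigcup_z \approvalBallotOfAgent{N^z}$ where $z$ ranges over \emph{all} groups from which \GCR drew an active voter — and one then argues, group by group, charging the cost of $W_\GCR \cap \approvalBallotOfAgent{N^z}$ against $\cost(G_{N^z})$ using the \verb|if|-condition, while handling the non-passing groups by noting \GCR could not have exhausted their budget. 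I would carry out this charging argument carefully, splitting $W_\GCR$ according to which group's \GCR-step introduced each project, and that bookkeeping — ensuring no project's cost is double-counted and that each passing group's $\cost(G_{N^z})$ absorbs its share — is the crux of the proof.
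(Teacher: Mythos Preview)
Your reduction to the sufficient condition $\cost(W_\GCR) \le \sum_{z \text{ passing}} \cost(G_{N^z})$ is too aggressive, and in fact this condition can fail. Concretely, take $n=3$, $B=3$, voters with approval sets $A_1=\{a,c\}$, $A_2=\{b\}$, $A_3=\{c,d\}$ and costs $\cost(a)=\cost(b)=1$, $\cost(c)=\cost(d)=0.5$. Then \GCR selects $\{c,d\}$, then $\{a\}$, then $\{b\}$, so $W_\GCR=\{a,b,c,d\}$ with $\cost(W_\GCR)=3$. Here $|A_1\cap W_\GCR|=2>1=|G_{N^1}|$, so group~$1$ does \emph{not} pass the \texttt{if}-test, while groups~$2,3$ do, with $\cost(G_{N^2})=\cost(G_{N^3})=1$. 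Your sufficient condition would require $3\le 2$. The lemma itself still holds (all $b_i=0$ and $B-\cost(W_\GCR)=0$), but your route cannot reach it: by bounding $\sum_{z\text{ passing}}|N^z|\tfrac{B}{n}$ by $B$ you have discarded exactly the budget of the non-passing voters, which is what absorbs the cost of the \GCR steps they belong to. Relatedly, the one inequality you do establish, $\cost(A_{N^z}\cap W_\GCR)\ge \cost(G_{N^z})$, points the \emph{wrong} way for the charging argument you sketch: even if the sets $A_{N^z}\cap W_\GCR$ partitioned $W_\GCR$, you would only get $\cost(W_\GCR)\ge \sum_z \cost(G_{N^z})$. (Also, weak $(\beta,T)$-cohesiveness only gives $|A_i\cap T|\ge\beta$, not $T\subseteq A_i$, so the coverage claim $W_\GCR\subseteq \bigcup_z A_{N^z}$ needs a different justification.)

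The idea you are missing is to work \GCR-step by \GCR-step rather than project by project. For each step $j$ whose voter set $N_j$ contains a passing unanimous subgroup $N_j^z$, one shows $\cost(T_j)\le \cost(G_{N_j^z})$ by invoking \GCR's tie-breaking: if $\cost(G_{N_j^z})<\cost(T_j)$, then the residual set $G_{N_j^z}\setminus \bigcup_{t<j}T_t$ together with $N_j^z$ would have been weakly cohesive at step~$j$ with either a strictly larger~$\beta$ or, when $\beta$ ties, strictly smaller cost---contradicting the choice of $(N_j,T_j)$. With this in hand, one sums $|N_j|\tfrac{B}{n}-\cost(T_j)$ over all steps (using weak cohesiveness to ensure nonnegativity for steps with no passing subgroup) and bounds $\sum_i b_i$ against that total. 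Your ``$G_{N^z}$ is the cheapest $|G_{N^z}|$-subset'' observation does not enter; the tie-breaking of \GCR is the crux.
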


\begin{proof}
For ease of exposition, in this proof, we \emph{re-order} the~$r$ weakly cohesive groups encountered by \GCR in \cref{alg:pb_bw_gcr:GCR}.
Let $r' \in \{0, 1, 2, \dots, r\}$ be an index such that for all~$j \in [r']$, $N_j \cap \widetilde{N} \neq \emptyset$, i.e., there exists a unanimous group in~$N_j$ such that the \verb|if|-condition holds.
For each~$j \in [r]$, let $\{N_j^1, N_j^2, \dots, N_j^{\eta_j}\}$ be the partition of the (maximal) unanimous groups of voters~$N_j$.
We also assume without loss of generality that the first~$\eta'_j \in \{0, 1, 2, \dots, \eta_j\}$ unanimous groups are the ones such that the \verb|if|-condition holds.
Note that $\eta'_j \geq 1$ for all~$j \in [r']$.
Denote by~$N_\GCR \coloneqq \bigcup_{j \in [r]} N_j$ the set of inactive voters due to \GCR.
Note that for all~$i \in N$, $b_i \leq \frac{B}{n}$.
We will also make use of the following claim:

\begin{claim}
\label{claim:PB:approval:cost-comparison}
$\forall j \in [r'], z \in [\eta'_j], \cost(T_j) \leq \cost(G_{N_j^z})$.
\end{claim}

\begin{proof}[Proof of Claim]
For ease of expression, let $T_{[j-1]} \coloneqq \bigcup_{t = 1}^{j-1} T_t$ denote the set of projects added by \GCR in the first~$j-1$ steps and $G'_{N_j^z} \coloneqq G_{N_j^z} \setminus T_{[j-1]}$.
Suppose for the sake of contradiction that $\cost(G_{N_j^z}) < \cost(T_j)$.
Recall that the unanimous group~$N_j^z$ is $G_{N_j^z}$-cohesive.
It follows that $N_j^z$ is $G'_{N_j^z}$-cohesive because $|N_j^z| \cdot \frac{B}{n} \geq \cost(G_{N_j^z}) \geq \cost(G'_{N_j^z})$ and $G'_{N_j^z} \subseteq \approvalBallotOfAgent{N_j^z}$.

We first show that $\beta_j = |G'_{N_j^z}|$.
If $\beta_j < |G'_{N_j^z}|$, then the $j$-th step of \GCR would have added projects~$G'_{N_j^z}$ (instead of~$T_j$) because~$N_j^z$ is $G'_{N_j^z}$-cohesive and \GCR breaks ties in favor of larger~$\beta$.
If $\beta_j > |G'_{N_j^z}|$, then $\beta_j \geq |G'_{N_j^z}| + 1$.
Recall that $j \in [r']$ and $z \in [\eta'_j]$, meaning $|\approvalBallotOfAgent{N_j^z} \cap W_\GCR| = |G_{N_j^z}|$.
We, however, have
\begin{align*}
|G_{N_j^z}|
&= |\approvalBallotOfAgent{N_j^z} \cap W_\GCR| \\
&\geq |\approvalBallotOfAgent{N_j^z} \cap T_{[j]}|
= |\approvalBallotOfAgent{N_j^z} \cap T_{[j-1]}| + |\approvalBallotOfAgent{N_j^z} \cap T_j| \\
&= |\approvalBallotOfAgent{N_j^z} \cap T_{[j-1]}| + \beta_j \\
&\geq |\approvalBallotOfAgent{N_j^z} \cap T_{[j-1]}| + |G'_{N_j^z}| + 1 \\
&\geq |G_{N_j^z} \cap T_{[j-1]}| + |G_{N_j^z} \setminus T_{[j-1]}| + 1 \\
&= |G_{N_j^z}| + 1,
\end{align*}
a contradiction.

However, if $\beta_j = |G'_{N_j^z}|$, then the $j$-th step of \GCR would have added projects~$G'_{N_j^z}$.
This is because $N_j^z$ is $G'_{N_j^z}$-cohesive and conditioning on the same~$\beta$ value, \GCR breaks ties in favor of~$G'_{N_j^z}$ due to $\cost(G'_{N_j^z}) \leq \cost(G_{N_j^z}) < \cost(T_j)$.
\end{proof}

We are now ready to establish the statement of the lemma:
\allowdisplaybreaks
\begin{align*}
\sum_{i \in N} b_i
&= \sum_{i \in N_\GCR} b_i + \sum_{i \in N \setminus N_\GCR} b_i \\
&= \sum_{j = 1}^{r'} \sum_{z = 1}^{\eta'_j} \left( |N_j^z| \cdot \frac{B}{n} - \cost(G_{N_j^z}) \right) + \sum_{i \in N \setminus N_\GCR} b_i \\
&\leq \sum_{j = 1}^{r'} \Bigg( |N_j| \cdot \frac{B}{n} - \sum_{z = 1}^{\eta'_j} \cost(G_{N_j^z}) \Bigg) + \sum_{i \in N \setminus N_\GCR} b_i \\
&\leq \sum_{j = 1}^{r'} \left( |N_j| \cdot \frac{B}{n} - \cost(T_j) \right) + \sum_{i \in N \setminus N_\GCR} b_i \\
&\leq \sum_{j = 1}^{r} \left( |N_j| \cdot \frac{B}{n} - \cost(T_j) \right) + \sum_{i \in N \setminus N_\GCR} b_i \\
&\leq |N_\GCR| \cdot \frac{B}{n} - \cost(W_\GCR) + |N \setminus N_\GCR| \cdot \frac{B}{n} \\
&= B - \cost(W_\GCR),
\end{align*}
where the fourth transition is due to \Cref{claim:PB:approval:cost-comparison} and $\eta'_j \geq 1$, and the fifth transition is due to weak cohesiveness.
\end{proof}

Our next result establishes that~$\vec{p}$ satisfies Strong UFS.

\begin{lemma}
\label{lem:PB:approval:BW-GCR-PB:sUFS}
\Cref{alg:pb_bw_gcr} outputs a fractional outcome~$\vec{p}$ that satisfies Strong UFS.
\end{lemma}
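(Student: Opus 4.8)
The plan is to fix an arbitrary unanimous group $S \subseteq N$ and verify that the guarantee \eqref{eq:PB:sUFS-guarantee} holds for every voter $i \in S$, i.e., $u_i(\vec{p}) \geq \max_{\vec{t} \in \feasibleFracOutcomes{|S| \cdot B/n}} u_i(\vec{t})$. Since utilities are binary, the right-hand side is exactly $|G_S|$ if $A_S = G_S$ (the agent can afford all approved projects), and otherwise it equals $|G_S| + \frac{|S| \cdot B/n - \cost(G_S)}{\cost(g_{\kappa_S+1})}$, where $g_{\kappa_S+1}$ is the cheapest unfunded approved project; in either case the optimal fractional outcome for $i$ funds $G_S$ fully and puts the remaining budget on the next-cheapest approved project. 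So it suffices to show that $\vec{p}$ gives voter $i$ at least this much utility.

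The key case split is on whether the \verb|if|-condition triggered for the unanimous group $N^z = S$. First I would handle the case where the condition holds, i.e., $|A_S \cap W_\GCR| = |G_S|$. Here \Cref{alg:pb_bw_gcr:b_i} gives each voter in $S$ budget $b_i = B/n - \cost(G_S)/|S|$, and the group spends its total budget $|S| \cdot B/n - \cost(G_S)$ on the cheapest approved project $c$ (which is $g_1$, or if $g_1 \in G_S$ is already fully funded, the cheapest approved project not yet saturated — I will need to argue that this is precisely $g_{\kappa_S+1}$ when $A_S \neq G_S$, since $G_S$ consists of the $\kappa_S$ cheapest approved projects). Since $|A_S \cap W_\GCR| = |G_S|$, the $|G_S|$ approved projects funded by \GCR are worth $|G_S|$ to $i$ under $\vec{1}_{W_\GCR}$; this actually requires the observation that \GCR-funded approved projects, together with the extra spending, cover the $G_S$-equivalent utility. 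The cleanest route: $u_i(\vec{p}) \geq |A_S \cap W_\GCR| + (\text{fraction added to } c) = |G_S| + \frac{|S| \cdot B/n - \cost(G_S)}{\cost(c)}$, and since $\cost(c) \leq \cost(g_{\kappa_S+1})$, this dominates the target. The subtlety is the proviso ``provided the updated $p_c \leq 1$'': I must check the group's leftover budget never forces $p_c > 1$, or else handle spillover — but by choice of $G_S$ as maximal, the leftover is less than $\cost(g_{\kappa_S+1}) \leq \cost(c)$ when $c = g_{\kappa_S+1}$, giving a fractional contribution below $1$; if instead $c$ is already partially funded by \GCR, one funds it to $1$ and moves to the next cheapest, and a short accounting shows the target is still met.

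The second case is when the \verb|if|-condition fails for $S$, i.e., $|A_S \cap W_\GCR| \neq |G_S|$. Since every project in $G_S$ satisfies $\cost(G_S) \leq |S| \cdot B/n$, the group $S$ is $G_S$-cohesive, hence weakly $(|G_S|, G_S)$-cohesive; because $W_\GCR$ is \FJR (from \Cref{thm:PB:Gandhi-et-al}'s \GCR-based input — more precisely, \GCR outputs an \FJR outcome by \citet{PPS21a}), some voter in $S$, and hence every voter in $S$ by unanimity, gets $|A_S \cap W_\GCR| \geq |G_S|$ approved projects. Combined with $|A_S \cap W_\GCR| \neq |G_S|$, this forces $|A_S \cap W_\GCR| \geq |G_S| + 1 \geq \max_{\vec{t}} u_i(\vec{t})$ whenever $A_S = G_S$; and when $A_S \neq G_S$, I need $|G_S| + 1 \geq |G_S| + \frac{|S| \cdot B/n - \cost(G_S)}{\cost(g_{\kappa_S+1})}$, which holds because the leftover budget $|S| \cdot B/n - \cost(G_S)$ is strictly less than $\cost(g_{\kappa_S+1})$ by maximality of $G_S$. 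Since $u_i(\vec{p}) \geq |A_S \cap W_\GCR|$ (the algorithm only ever increases $\vec{p}$ from $\vec{1}_{W_\GCR}$), we are done.

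I expect the main obstacle to be the bookkeeping in the first case around the ``provided $p_c \leq 1$'' proviso and the possibility that the cheapest approved project is already (partially) funded by \GCR, so that the group's budget gets split across $c = g_1, g_2, \dots$ until exhausted — one must argue that this greedy spending, starting from the cheapest projects, always achieves at least the utility of the optimal fractional outcome under budget $|S| \cdot B/n$, which is itself a greedy-by-cost outcome. A clean lemma stating ``greedily funding approved projects in increasing cost order with budget $\ge \cost(G_S)$ already funded yields utility $\ge \max_{\vec t \in \feasibleFracOutcomes{|S|B/n}} u_i(\vec t)$'' would encapsulate this and make both cases uniform.
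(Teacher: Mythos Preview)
Your proposal is correct and follows essentially the same route as the paper: both arguments reduce to maximal unanimous groups, use the fact that $S$ is $G_S$-cohesive together with \FJR of $W_\GCR$ to obtain $|A_S \cap W_\GCR| \geq |G_S|$, and then split into the cases $|A_S \cap W_\GCR| \geq |G_S|+1$ (where $u_i(\vec{p}) \geq |G_S|+1 > |G_S|+\delta_S$ is immediate) and $|A_S \cap W_\GCR| = |G_S|$ (where the group's leftover budget is spent on a cheap approved project outside $W_\GCR$).

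One remark on the subtlety you flag. Your worry about the proviso ``provided the updated $p_c \leq 1$'' and possible spillover is legitimate---the paper's displayed inequality $u_i(\vec{p}) \geq |A_i\cap W_\GCR| + \tfrac{|S|\cdot B/n - \cost(G_S)}{\cost(g)}$ is not literally justified when the group's contribution would push $p_g$ above~$1$. But the fix is much lighter than the ``clean lemma'' you propose: if the cap binds, then $p_g = 1$ for some approved project $g \notin W_\GCR$, so already $u_i(\vec{p}) \geq |A_S \cap W_\GCR| + 1 = |G_S| + 1 > |G_S| + \delta_S$, and you are done without any further accounting of where the residual budget went. The paper sweeps this under the phrase ``this only helps voter~$N^z$ accumulate higher utilities,'' which is morally the same observation.
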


\begin{proof}
Let us first establish connections between unanimous groups considered by Strong UFS and cohesive groups considered by \EJR.\footnote{Since \FJR implies \EJR, our discussion is carried over to weakly cohesive groups considered by \FJR.}
Recall that $G_S \coloneqq \{g_1, g_2, \dots, g_{\kappa_S}\}$ is the maximal subset of projects approved by unanimous group $S$ (in non-decreasing order of cost) such that $\cost(G_S) \leq |S| \cdot \frac{B}{n}$.
Since $|S| \cdot \frac{B}{n} \geq \cost(G_S)$ and $G_S \subseteq A_S$, we observe that the unanimous group~$S$ is in fact $G_S$-cohesive.

It follows that given an \EJR (or \FJR) outcome~$W$, for \emph{all}~$i \in S$, $|\approvalBallotOfAgent{i} \cap W| \geq |G_S|$.
We thus conclude:

\begin{claim}
\label{claim:PB:approval:EJR-utility-guarantee}
Given an instance of PB with binary utilities, fix any unanimous group of voters~$S \subseteq N$ and any \EJR (or \FJR) outcome~$W$, then, for all~$i \in S$, $|\approvalBallotOfAgent{i} \cap W| \geq |G_S|$.
\end{claim}

We now provide an alternative description of \Cref{eq:PB:sUFS-guarantee} in the definition of Strong UFS.
According to the RHS of \Cref{eq:PB:sUFS-guarantee}, the group~$S$ is endowed with a budget of $|S| \cdot \frac{B}{n}$ to select a fractional outcome.
An optimal fractional outcome can be achieved by fully funding~$G_S$ and next funding a $\delta_S$ fraction of project~$g_{\kappa_S+1}$, where $\delta_S \coloneqq \frac{|S| \cdot \frac{B}{n} - \cost(G_S)}{\cost(g_{\kappa_S+1})} < 1$.
Thus, in this case, we can rewrite the RHS of \Cref{eq:PB:sUFS-guarantee}:
\begin{equation}
\label{eq:PB:sUFS-alternative}
\textstyle
\max_{\vec{t} \in \feasibleFracOutcomes{|S| \cdot \frac{B}{n}}} u_i(\vec{t}) = |G_S| + \delta_S.
\end{equation}

We are now prepared to show that each unanimous group is satisfied with respect to Strong UFS.
Fix any~$z \in [\eta]$ and any voter~$i \in N^z$.
Since $W_\GCR$ satisfies \FJR, by \Cref{claim:PB:approval:EJR-utility-guarantee}, $|\approvalBallotOfAgent{i} \cap W_\GCR| \geq |G_{N^z}|$.
If $\approvalBallotOfAgent{N^z} = G_{N^z}$, then $u_i(\vec{p}) \geq |\approvalBallotOfAgent{i} \cap W_\GCR| \geq |A_{N^z}|$, meaning that voter~$i$ is satisfied with respect to Strong UFS.
We thus assume from now on $\approvalBallotOfAgent{N^z} \setminus G_{N^z} \neq \emptyset$.
Put differently, the project~$g_{\kappa_{N^z} + 1}$ is well-defined.
We will use the RHS of \Cref{eq:PB:sUFS-alternative} as the target utility to reason about Strong UFS.
Specifically, we will show that $u_i(\vec{p}) \geq |G_{N^z}| + \delta_{N^z}$.
We will mainly distinguish cases between $|\approvalBallotOfAgent{i} \cap W_\GCR| \geq |G_{N^z}| + 1$ and $|\approvalBallotOfAgent{i} \cap W_\GCR| = |G_{N^z}|$.

If $|\approvalBallotOfAgent{i} \cap W_\GCR| \geq |G_{N^z}| + 1$, Strong UFS is satisfied as
\begin{align*}
u_i(\vec{p})
\geq |\approvalBallotOfAgent{i} \cap W_\GCR|
\geq |G_{N^z}| + 1
> |G_{N^z}| + \delta_{N^z}
= \max_{\vec{t} \in \feasibleFracOutcomes{|N^z| \cdot \frac{B}{n}}} u_i(\vec{t}).
\end{align*}

We now move on to the case where $|\approvalBallotOfAgent{i} \cap W_\GCR| = |G_{N^z}|$.
If $A_{N^z} \subseteq W_\GCR$, clearly, Strong UFS is already satisfied.
We thus assume $A_{N^z} \setminus W_\GCR \neq \emptyset$.
It means that there must exist a project~$g \in \{g_1, g_2, \dots, g_{\kappa_{N^z}}, g_{\kappa_{N^z} + 1}\}$ such that $g \notin W_\GCR$ and $\cost(g) \leq g_{\kappa_{N^z} + 1}$.
\Cref{alg:pb_bw_gcr:b_i} of \Cref{alg:pb_bw_gcr} gives voters~$N^z$ a total budget of $|N^z| \cdot \frac{B}{n} - \cost(G_{N^z})$ to spend on project~$g$.
It may be that project~$g$ has already been funded to some extent in previous step(s), but this only helps voter~$N^z$ accumulate higher utilities.
It follows easily that
\begin{align*}
u_i(\vec{p})
&\geq |\approvalBallotOfAgent{i} \cap W_\GCR| + \frac{|N^z| \cdot \frac{B}{n} - \cost(G_{N^z})}{\cost(g)} \\
&\geq |G_{N^z}| + \frac{|N^z| \cdot \frac{B}{n} - \cost(G_{N^z})}{\cost(g_{\kappa_{N^z} + 1})}
= |G_{N^z}| + \delta_{N^z}
=\max_{\vec{t} \in \feasibleFracOutcomes{|N^z| \cdot \frac{B}{n}}} u_i(\vec{t}).
\qedhere
\end{align*}
\end{proof}

Using these statements, we now prove \Cref{thm:PB:approval:sUFS+FJR}.

\begin{proof}[Proof of \Cref{thm:PB:approval:sUFS+FJR}]
We first show that the fractional outcome~$\vec{p}$ returned by \Cref{alg:pb_bw_gcr} is feasible, i.e., for all~$c \in C$, $p_c \leq 1$ and $\cost(\vec{p}) = B$.
First, by the design of \Cref{alg:pb_bw_gcr}, we maintain $p_c \leq 1$ for all~$c \in C$ at any step.
Next, we show that $\cost(\vec{p}) = B$.
By \Cref{lem:PB:approval:BW-GCR-PB:valid-budgets}, we have $\cost(W_\GCR) + \sum_{i \in N} b_i \leq B$, meaning that $\cost(\vec{p}) \leq B$ before \cref{alg:pb_bw_gcr:feasibility} executes.
Finally, according to how we increase~$\vec{p}$ in \cref{alg:pb_bw_gcr:feasibility} and our assumption that $\cost(C) \geq B$, we conclude that the final fractional outcome~$\vec{p}$ is feasible.

Next, by \Cref{lem:PB:approval:BW-GCR-PB:sUFS}, the fractional outcome~$\vec{p}$ returned by \Cref{alg:pb_bw_gcr} satisfies Strong UFS.

Finally, by \Cref{thm:PB:Gandhi-et-al}, the lottery which implements~$\vec{p}$ thus satisfies ex-ante Strong UFS as well as the sampled outcome~$W$ satisfies ex-post \BBOne.
Ex-post \FJR follows from \citet{PPS21a} ($W_\GCR$ is \FJR) and from \Cref{thm:PB:Gandhi-et-al} that $W_\GCR$ is included in the integral outcome sampled.
\end{proof}

\subsection{Ex-ante Strong UFS + Ex-post EJR (in Polynomial Time)}
\label{sec:PB:approval:SUFS+EJR}

Despite providing strong ex-ante and ex-post fairness guarantees, \BWGCRforPB does not run in polynomial time.
We present here a polynomial-time algorithm that is ex-ante Strong UFS, at the cost of weakening ex-post fairness guarantee to \EJR.

\begin{theorem}
\label{thm:PB:approval:sUFS+EJR}
In PB with binary utilities, \Cref{alg:pb_bw_mes} computes an integral outcome sampled from a lottery that is ex-ante Strong UFS, ex-post \BBOne, and ex-post EJR in polynomial time.
\end{theorem}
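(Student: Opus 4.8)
The plan is to mirror the structure of the proof of \Cref{thm:PB:approval:sUFS+FJR}, replacing the exponential-time Greedy Cohesive Rule with a polynomial-time rule that produces an \EJR outcome rather than an \FJR one, and then re-running the same budget-assignment-plus-rounding machinery on top of it. Concretely, I would replace \cref{alg:pb_bw_gcr:GCR} of \Cref{alg:pb_bw_gcr} with a call to the Method of Equal Shares (\MES) of \citet{PPS21a}, which is known to run in polynomial time and to produce an \EJR outcome for PB with approval (binary) utilities. Call the resulting set $W_\MES$, initialize $\vec p \gets \vec 1_{W_\MES}$, and then keep everything downstream identical: loop over the (maximal) unanimous groups $N^1, \dots, N^\eta$, and for each group $N^z$ whose members get utility exactly $|G_{N^z}|$ from $W_\MES$, give each member budget $b_i = \frac{B}{n} - \frac{1}{|N^z|}\cost(G_{N^z})$ and have the group spend its pooled budget $|N^z|\cdot\frac{B}{n} - \cost(G_{N^z})$ on the cheapest project in $\approvalBallotOfAgent{N^z}$ (capping at $p_c \le 1$), then top $\vec p$ up arbitrarily to a feasible fractional outcome, and finally sample $W$ via \Cref{thm:PB:Gandhi-et-al}.

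The three things to verify are then: (i) feasibility of $\vec p$, i.e.\ that $\cost(W_\MES) + \sum_{i} b_i \le B$; (ii) ex-ante Strong UFS of $\vec p$; and (iii) ex-post \EJR and ex-post \BBOne of the sampled $W$. For (iii), ex-post \BBOne is immediate from \Cref{thm:PB:Gandhi-et-al}, and ex-post \EJR follows because \Cref{thm:PB:Gandhi-et-al} guarantees every project in the support of $\vec p$ — in particular every project of $W_\MES$ — appears in the sampled outcome, so $W_\MES \subseteq W$ and \EJR is inherited (adding projects can never violate \EJR). For (ii), the argument of \Cref{lem:PB:approval:BW-GCR-PB:sUFS} goes through essentially verbatim: \Cref{claim:PB:approval:EJR-utility-guarantee} was already stated for \EJR outcomes (the footnote there merely notes \FJR is a special case), so from $W_\MES$ being \EJR we get $|\approvalBallotOfAgent{i}\cap W_\MES| \ge |G_{N^z}|$ for every $i \in N^z$, and the same case split on whether $|\approvalBallotOfAgent{i}\cap W_\MES|$ exceeds $|G_{N^z}|$ or equals it, together with the budget spent on the cheapest unfunded approved project, yields $u_i(\vec p) \ge |G_{N^z}| + \delta_{N^z} = \max_{\vec t \in \feasibleFracOutcomes{|N^z|\cdot\frac{B}{n}}} u_i(\vec t)$.

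The main obstacle is (i), the validity of the budget assignment — the analogue of \Cref{lem:PB:approval:BW-GCR-PB:valid-budgets}. That lemma's proof leaned heavily on structural tie-breaking properties of \GCR (notably \Cref{claim:PB:approval:cost-comparison}, that each step's selected bundle $T_j$ is no more expensive than $G_{N_j^z}$ for the relevant unanimous subgroups), and \MES does not select maximal cohesive bundles in the same greedy order, so that claim is not directly available. I would instead argue charging-style: \MES spends at most $B$ total, and it allocates to each voter a personal budget of $\frac{B}{n}$; a unanimous group $N^z$ that triggers the \verb|if|-condition has, by \EJR-type reasoning, already had roughly $\cost(G_{N^z})$ worth of its members' approved projects purchased on its behalf, so the residual $|N^z|\cdot\frac{B}{n} - \cost(G_{N^z})$ we hand back is bounded by what \MES left unspent on those voters plus a careful accounting of the cheapest-project top-up. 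Making this accounting tight — ensuring the sum of handed-back budgets across all triggered groups plus $\cost(W_\MES)$ stays within $B$ — is the delicate part, and I expect the proof to hinge on a per-voter potential argument showing each voter's contribution to $\cost(W_\MES) + b_i$ is at most $\frac{B}{n}$, exactly as in the final chain of inequalities in \Cref{lem:PB:approval:BW-GCR-PB:valid-budgets}, but justified via \MES's payment structure rather than \GCR's tie-breaking.
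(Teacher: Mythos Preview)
Your approach is workable but diverges from the paper's in a way that matters, and the piece you flag as ``the delicate part'' is precisely where you are missing the right idea.

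The paper's \Cref{alg:pb_bw_mes} does \emph{not} replicate the \GCR-style budget assignment $b_i = \frac{B}{n} - \frac{1}{|N^z|}\cost(G_{N^z})$ on top of $W_\MES$. Instead it uses each voter's \emph{actual leftover \MES budget} $b_i = \frac{B}{n} - \sum_{j\in W_\MES} y_{ij}$ and has each voter individually spend this on the cheapest approved project outside $W_\MES$. This makes feasibility automatic: $\sum_{i\in N} b_i = B - \cost(W_\MES)$ holds by construction, so no analogue of \Cref{lem:PB:approval:BW-GCR-PB:valid-budgets} is needed at all. The technical work shifts entirely to the Strong UFS side, where the key step is the payment bound
\[
|N^z|\cdot \sum_{c\in \approvalBallotOfAgent{N^z}\cap W_\MES} y_{ic} \;\le\; \cost(G_{N^z}),
\]
proved project-by-project via $\rho$-affordability: if at the step \MES added the $t$-th approved project $h_t$ the per-voter payment exceeded $\cost(g_t)/|N^z|$, then the still-available project $g_t$ (or something cheaper) would have been $\rho$-affordable for a smaller $\rho$, contradicting \MES's choice. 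This bound is what guarantees the group's leftover budget is at least $|N^z|\cdot\frac{B}{n} - \cost(G_{N^z})$, enough for the $\delta_{N^z}$ top-up.

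Your route can be completed, but your ``\EJR-type reasoning'' and ``per-voter potential'' sketch are not the right tools for the feasibility gap. \EJR bounds the \emph{number} of approved projects in $W_\MES$, not the group's \emph{payments}, and it is the latter you need. In fact, proving your feasibility inequality $\sum_{z\text{ triggered}}(|N^z|\frac{B}{n}-\cost(G_{N^z})) \le B - \cost(W_\MES) = \sum_i b_i^{\MES}$ reduces exactly to the same $\rho$-affordability payment bound above (applied to each triggered group). So both routes ultimately hinge on the same \MES-specific lemma; the paper's route is cleaner because it makes feasibility free and isolates the lemma where it is actually doing work.
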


\begin{algorithm}[t]
\caption{\BWMESforPB: Strong UFS and EJR}
\label{alg:pb_bw_mes}
\DontPrintSemicolon

\KwIn{Voters~$N = [n]$, projects~$C = [m]$, budget~$B$, cost function~$\cost$, and utilities~$(\vec{u}_i)_{i \in N}$.}

$W_\MES \gets \MES(N, C, \cost, B, (u_i)_{i \in N})$\; \label{alg:pb_bw_mes:mes}
$\vec{p} \gets \vec{1}_{W_\MES}$\;
Let $y_{ij}$ for each~$i \in N$ and~$j \in W_\MES$ be the amount voter~$i$ spent on project~$j$ during \MES.\;
$\budgetMES_i \gets \frac{B}{n} - \sum_{j \in W_\MES} y_{ij}$ for all~$i \in N$, which is the \emph{remaining} budget of voter~$i$ after \MES

$N' \gets \{i \in N \mid \approvalBallotOfAgent{i} \setminus W_\MES \neq \emptyset\}$\;
\ForEach{$i \in N'$}{
	Let $\kappa_i \in \argmin_{c \in \approvalBallotOfAgent{i} \setminus W_\MES} \cost(c)$\; \label{alg:pb_bw_mes:pick-favorite-project}
	$y_{i \kappa_i} \gets \budgetMES_i$\;  \label{alg:pb_bw_mes:fund-favorite-project}
}
\ForEach{$i \in N \setminus N'$}{
	Voter~$i$ spends $\budgetMES_i$ arbitrarily provided $\sum_{i \in N} y_{ij} \leq \cost(j)$ for all~$j \in C$. \label{alg:pb_bw_mes:fund-arbitrarily}
}
\lForEach{$j \in C$}{
	$p_j \gets \frac{\sum_{i \in N} y_{ij}}{\cost(j)}$. \label{alg:pb_bw_mes:construct-fractional}
}

Obtain an outcome~$W$ sampled from the lottery implementing~$\vec{p}$ by applying \Cref{thm:PB:Gandhi-et-al}. \label{alg:pb_bw_mes:sample}

\Return{$\vec{p}$ and~$W$} \label{alg:pb_bw_mes:implementation}
\end{algorithm}

At a high level, our algorithm \BWMESforPB (\Cref{alg:pb_bw_mes}) gives each voter an initial budget of~$B/n$ and uses the \emph{Method of Equal Shares (\MES)} of \citet{PPS21a} as a subroutine to obtain an \EJR outcome~$W_\MES$.
We now formally define the Method of Equal Shares (\MES) and its necessary components.

\begin{definition}[\MES]
Each voter is initially given a budget of~$B/n$.
We start with~$W = \emptyset$ and sequentially add projects to~$W$.
For each selected project~$j \in W$, we write~$y_{ij}$ for the amount that voter~$i$ pays for~$j$; we require that $\sum_{i \in N} y_{ij} = \cost(j)$.
We write $\budgetMES_i = B/n - \sum_{j \in W} y_{ij} \geq 0$ for the amount of budget voter~$i$ has left.
For $\rho \geq 0$, we say that a project~$j \notin W$ is \emph{$\rho$-affordable} if
\[
\sum_{i \in N} \min (\budgetMES_i, u_i(j) \cdot \rho) = \cost(j).
\]
If no project is $\rho$-affordable for any $\rho$, \MES terminates and returns~$W$.
Otherwise, it selects the project~$j \in C \setminus W$ that is $\rho$-affordable for minimum $\rho$.
Payments are given by $y_{ij} = \min (\budgetMES_i, u_i(j) \cdot \rho)$.
\end{definition}

A key step in the proof of \Cref{thm:PB:approval:sUFS+EJR} is to show that for each unanimous group~$N^z \subseteq N$, the remaining budget of the group~$\sum_{i \in N^z} \left( \frac{B}{n} - \sum_{c \in W_\MES} y_{i c} \right)$ is at least $|N^z| \cdot \frac{B}{n} - \cost(G_{N^z})$.
As a result, the group together can use their remaining budget to fund the project with the smallest cost and be satisfied with respect to Strong UFS.
We now prove the theorem.

\begin{proof}[Proof of \Cref{thm:PB:approval:sUFS+EJR}]
First, we show that the fractional outcome~$\vec{p}$ returned by \Cref{alg:pb_bw_mes} is feasible, i.e., $\cost(\vec{p}) = B$.
By the design of \Cref{alg:pb_bw_mes}, at any step, we maintain $p_c \leq 1$ for all~$c \in C$.
Next, since each voter starts with a budget of $B/n$ and spends the entirety of their budget, by the construction of $\vec{p}$ in \cref{alg:pb_bw_mes:construct-fractional} we have that
\begin{align*}
\sum_{j \in C} p_j \cdot \cost(j)
= \sum_{j \in C} \frac{\sum_{i \in N} y_{ij}}{\cost(j)} \cdot \cost(j)
= \sum_{i \in N} \sum_{j \in C} y_{ij} = \sum_{i \in N} B/n = B.
\end{align*}

Next, we show that the fractional outcome~$\vec{p}$ satisfies Strong UFS.
The proof idea is similar to that of \Cref{lem:PB:approval:BW-GCR-PB:sUFS}.
Fix any~$z \in \eta$ and any voter~$i \in N^z$.
If $\approvalBallotOfAgent{i} \subseteq W_\MES$, then voter~$i$ already gets the highest possible utility and thus is satisfied with respect to Strong UFS.
We hence assume from now on that $\approvalBallotOfAgent{i} \setminus W_\MES \neq \emptyset$.

Since~$W_\MES$ satisfies \EJR, by \Cref{claim:PB:approval:EJR-utility-guarantee}, $|\approvalBallotOfAgent{i} \cap W_\MES| \geq |G_{N^z}|$.
If $\approvalBallotOfAgent{N^z} = G_{N^z}$, then
\[
u_i(\vec{p}) \geq |\approvalBallotOfAgent{i} \cap W_\MES| \geq |G_{N^z}| = |\approvalBallotOfAgent{N^z}|,
\]
implying that Strong UFS is satisfied.
We thus focus on the case where~$\approvalBallotOfAgent{N^z} \setminus G_{N^z} \neq \emptyset$.
In other words, the project~$g_{\kappa_{N^z} + 1}$ is well-defined.
From \Cref{eq:PB:sUFS-alternative}, it suffices for us to show $u_i(\vec{p}) \geq |G_{N^z}| + \delta_{N^z}$.
If $|\approvalBallotOfAgent{i} \cap W_\MES| \geq |G_{N^z}| + 1$, clearly, Strong UFS is satisfied.
We now consider the case where $|\approvalBallotOfAgent{i} \cap W_\MES| = |G_{N^z}|$.

It can be observed from the definition of \MES that for any unanimous group~$S \subseteq N$ and for any pair of voters~$i, j \in S$, $b_i = b_j$ at any step; moreover, for any project~$c \in C$, $y_{i c} = y_{j c}$.
We show below that
\begin{equation}
\label{eq:PB:approval:BW-MES-PB:payment-comparison}
|N^z| \cdot \sum_{c \in \approvalBallotOfAgent{N^z} \cap W_\MES} y_{i c} \leq \cost(G_{N^z}).
\end{equation}
Put differently, the payments the unanimous group~$N^z$ made during \MES is at most the amount needed to buy~$G_{N^z}$.
Recall that $G_{N^z} = \{g_1, g_2, \dots, g_{\kappa_{N^z}}\}$ is the best set of projects voters~$N^z$ can afford as a group.
Let $\{h_1, h_2, \dots, h_t\} \subseteq \approvalBallotOfAgent{i} \cap W_\MES$ denote the first~$t$ projects added by \MES from $\approvalBallotOfAgent{i} \cap W_\MES$.
More specifically, we show that for each~$t = 1, 2, \dots, |G_{N^z}|$,
\[
|N^z| \cdot y_{i h_t} \leq \cost(g_t).
\]
Suppose for the sake of contradiction that $|N^z| \cdot y_{i h_t} > \cost(g_t)$.
By the definition of being $\rho$-affordable, at the step where \MES includes project~$h_t$, the $\rho$-value is at least~$y_{i h_t}$.
Note that at the moment, there still exists some project~$g \in \{g_1, g_2, \dots, g_t\}$ available to be funded, and, $\cost(g) \leq \cost(g_t)$.
As a result, project~$g$ is $\rho$-affordable with $\rho$-value upper bounded by $\frac{\cost(g_t)}{|N^z|}$, which is less than~$y_{i h_t}$, a contradiction.

Let $h \in \argmin_{c \in \approvalBallotOfAgent{i} \setminus W_\MES} \cost(c)$.
Since $|\approvalBallotOfAgent{i} \cap W_\MES| = |G_{N^z}|$, we have $h \in \{g_1, g_2, \dots, g_{\kappa_{N^z}}, g_{\kappa_{N^z} + 1}\}$ and $\cost(h) \leq \cost(g_{\kappa_{N^z} + 1})$.
As a result, we are able to show that Strong UFS is satisfied:
\allowdisplaybreaks
\begin{align*}
\sum_{c \in \approvalBallotOfAgent{i}} p_c
&= \sum_{c \in \approvalBallotOfAgent{i}} \frac{\sum_{j \in N} y_{j c}}{\cost(c)} \\
&= \sum_{c \in \approvalBallotOfAgent{i} \cap W_\MES} \frac{\sum_{j \in N} y_{j c}}{\cost(c)} + \sum_{c \in \approvalBallotOfAgent{i} \setminus W_\MES} \frac{\sum_{j \in N} y_{j c}}{\cost(c)} \\
&= |\approvalBallotOfAgent{i} \cap W_\MES| + \sum_{c \in \approvalBallotOfAgent{i} \setminus W_\MES} \frac{\sum_{j \in N} y_{j c}}{\cost(c)} \\
&\geq |\approvalBallotOfAgent{i} \cap W_\MES| + \frac{\sum_{j \in N} y_{j h}}{\cost(h)} \\
&\geq |\approvalBallotOfAgent{i} \cap W_\MES| + \frac{\sum_{j \in N^z} y_{j h}}{\cost(h)} \\
&\geq |\approvalBallotOfAgent{i} \cap W_\MES| + \frac{|N^z| \cdot y_{i h}}{\cost(h)} \\
&= |\approvalBallotOfAgent{i} \cap W_\MES| + \frac{|N^z| \cdot \left( \frac{B}{n} - \sum_{c \in \approvalBallotOfAgent{i} \cap W_\MES} y_{i c} \right)}{\cost(h)} \\
&\geq |\approvalBallotOfAgent{i} \cap W_\MES| + \frac{|N^z| \cdot \frac{B}{n} - \cost(G_{N^z})}{\cost(h)} \quad (\because \text{\Cref{eq:PB:approval:BW-MES-PB:payment-comparison}}) \\
&\geq |\approvalBallotOfAgent{i} \cap W_\MES| + \frac{|N^z| \cdot \frac{B}{n} - \cost(G_{N^z})}{\cost(g_{\kappa_{N^z} + 1})} \\
&= |G_{N^z}| + \delta_{N^z} = \max_{\vec{t} \in \feasibleFracOutcomes{|N^z| \cdot \frac{B}{n}}} u_i(\vec{t}).
\end{align*}

Finally, by \Cref{thm:PB:Gandhi-et-al}, the lottery which implements~$\vec{p}$ satisfies ex-ante Strong UFS and the sampled outcome~$W$ satisfies ex-post \BBOne.
Ex-post \EJR follows from \citet{PPS21a} ($W_\MES$ is \EJR) and from \Cref{thm:PB:Gandhi-et-al} that $W_\MES$ is included in the integral outcome sampled.
\end{proof}

\section{BoBW Fairness in General PB}
\label{sec:PB:general}

We now move on to the setting of general PB, in which we show a strong impossibility that ex-ante IFS and ex-post JR are not compatible, even in the unit-cost PB setting.
This impossibility is striking as, even in the general setting, the much stronger properties of ex-ante GFS and ex-post FJR are independently achievable via Fractional Random Dictator (\Cref{thm:PB:general:RD-properties}) and \GCR~\citep{PPS21a}, respectively.

We first define the notion of \emph{justified representation} \citep{PPS21a}.

\begin{definition}[JR]
\label{def:PB:general:JR}
In the general PB setting, a group of voters $S$ is \emph{$(\alpha, T)$-cohesive}, where $\alpha:C\rightarrow[0,1]$ and $T\subseteq C$, if $|S|/n\geq \cost(T)/B$ and if $u_{ij} \geq \alpha(j)$ for all $i\in S$ and $j\in T$.
An integral outcome~$W$ satisfies JR, if for each $\alpha:C\rightarrow[0,1]$, $j\in C$, and each $(\alpha, \{j\})$-cohesive group of agents, there exists an agent $i\in S$ such that $u_i(\vec{1}_W) \geq \alpha(j)$.
\end{definition}

We show that ex-ante IFS and ex-post JR are incompatible in the unit-cost setting with cardinal utilities.
Intuitively, in situations where voters have high utilities for distinct projects, the outcomes that guarantee the highest expected utility may not include a project which gives every voter non-zero utility.

\begin{theorem}
\label{thm:PB:unit-cost:impossibility:IFS+JR}
In unit-cost PB, ex-ante IFS and ex-post JR are incompatible.
\end{theorem}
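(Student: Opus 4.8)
The plan is to exhibit a single unit-cost instance with cardinal utilities for which no lottery over JR outcomes can be ex-ante IFS, in the spirit of the instance behind \Cref{thm:PB:approval:impossibility:GFS+JR}. I would take $n \ge 4$ voters, budget $B = 2$ (all costs equal to $1$), one ``common'' project $g^\star$ that every voter values at $1$, and, for each voter $i$, two ``personal'' projects $g_i^1, g_i^2$ that only voter~$i$ values, each at a large value $V$ with $V > n$. Since a voter's only positively-valued projects are $g^\star$ and her own two personal projects, and $B = 2$, her optimal fractional outcome fully funds $g_i^1$ and $g_i^2$, so $\max_{\vec t \in \feasibleFracOutcomes{}} u_i(\vec t) = 2V$.

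The first step is to argue that JR forces $g^\star$ into every outcome of any ex-post JR lottery. Let $\{(\lambda_j, W_j)\}_j$ be ex-post JR and implement a feasible $\vec p$. Since all costs are $1$ and $\cost(\vec p) = B = 2$, every $W_j$ in the support has exactly two projects. If some $W_j$ omits $g^\star$, then $W_j$ consists of at most two personal projects and hence gives zero utility to at least $n - 2 \ge \lceil n/2 \rceil$ voters; taking $\alpha(g^\star) = 1$, these voters form an $(\alpha, \{g^\star\})$-cohesive group (of size at least $n\cdot\cost(g^\star)/B = n/2$) that is entirely unrepresented, contradicting JR. Therefore $g^\star \in W_j$ for all $j$, so $p_{g^\star} = 1$ and $\sum_{c \ne g^\star} p_c = B - 1 = 1$.

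The second step derives the contradiction with IFS. For each voter $i$, using $p_{g^\star} = 1$ we have $u_i(\vec p) = 1 + V\,(p_{g_i^1} + p_{g_i^2})$, so IFS demands $p_{g_i^1} + p_{g_i^2} \ge \tfrac{2V/n - 1}{V} = \tfrac{2}{n} - \tfrac{1}{V}$. Summing over all $n$ voters, and using that the $2n$ personal projects are pairwise distinct and all distinct from $g^\star$, we get $1 = \sum_{c \ne g^\star} p_c \ge \sum_{i} (p_{g_i^1} + p_{g_i^2}) \ge 2 - \tfrac{n}{V} > 1$ since $V > n$ — a contradiction.

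The main obstacle is making this quantitative clash bite, and this is exactly where cardinal utilities are essential: with binary utilities $\max_{\vec t} u_i(\vec t) \le B$, so each voter's IFS demand beyond $g^\star$ is at most $B/n - 1$ and the total demand always fits within the single unit of leftover budget — this is why IFS$+$JR is attainable for approval-based committee voting \citep{ALS+23}. Taking $V$ large pushes each voter's demand above its binary ceiling, so the $n$ personal ``bundles'' collectively over-demand the one budget unit that remains once $g^\star$ is forcibly funded. I would also verify that no other cohesive groups are in play: each personal project has a singleton fan club, which fails the size threshold $1/n \ge \cost(\cdot)/B = 1/2$ for $n > 2$, so the only binding JR requirement is the one used in Step~1, every $\{g^\star, x\}$ is indeed JR (so the instance is non-degenerate and the impossibility is about the combination), and the argument never invokes \BBOne.
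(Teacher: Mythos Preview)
Your construction and two-step argument are essentially identical to the paper's: the same instance (one common project worth $1$ to everyone, two personal projects per voter each worth a large $V$, unit costs, budget $2$, $n\ge 4$), the same Step~1 (JR forces the common project into every outcome in the support), and an equivalent Step~2 (the paper averages total utility to find a shortchanged voter, you sum the per-voter IFS lower bounds --- these are dual computations leading to the same threshold $V>n$).

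There is, however, one genuine gap, and the paper's own proof shares it: the claim that ``every $W_j$ in the support has exactly two projects'' does not follow from $\cost(\vec p)=B=2$. A lottery implementing a feasible $\vec p$ may mix outcomes of different sizes; for instance $\bigl(\tfrac12,\{g^\star\}\bigr),\bigl(\tfrac12,\{g^\star,g_1^1,g_1^2\}\bigr)$ implements a feasible $\vec p$ and both outcomes are JR. Hence Step~1 as written does not establish $p_{g^\star}=1$. A clean repair is to reverse the order of the two steps. Using only $\sum_c p_c=2$, summing $u_i(\vec p)=p_{g^\star}+V(p_{g_i^1}+p_{g_i^2})$ over $i$ gives $\sum_i u_i(\vec p)=n\,p_{g^\star}+V(2-p_{g^\star})$; IFS requires this to be at least $2V$, so $p_{g^\star}(n-V)\ge 0$, forcing $p_{g^\star}=0$ when $V>n$. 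Then every $W_j$ omits $g^\star$, and now your JR argument applies without any size assumption: at most $\lceil n/2\rceil-1$ voters can receive zero utility, so $|W_j|\ge \lfloor n/2\rfloor+1\ge 3$ for $n\ge 4$, contradicting $\E[|W|]=\cost(\vec p)=2$.
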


\begin{proof}
Consider any instance of unit-cost PB with $n \geq 4$, $m = 2n + 1$, and in which $k = 2$.
    Suppose the project set is composed of (i) one project $c$, for which $u_{ic} = 1$ for all $i\in N$, and additionally (ii) a set of two projects $G_i$ for each agent $i\in N$, such that agent $i$ gets value $H$ from each project in $G_i$ and every other agent $i'\neq i$ gets zero utility from each project in $G_i$.
    Note that $N$ constitutes a $(1,c)$-cohesive group.

    We begin by showing that any integral outcome which satisfies ex-post JR must contain $c$.
    Suppose, for a contradiction, that there exists some integral committee $W$ which satisfies ex-post JR and which does not contain $c$.
    Since $c\not\in W$, there are at least $n-2$ agents who receive zero utility from $W$. Denote this group of agents $S$.
    Note that $n\geq 4 \implies n-2\geq n/2$, and each agent in $S$ receives one utility from $c$. Thus, $S$ are $(1,c)$-cohesive.
    However, each agent in $S$ receive zero utility from $W$ which contradicts that $W$ satisfies ex-post JR.

    Thus, any randomized committee which satisfies ex-post JR for our instance must be of the form $\lbrace (\lambda_1, \{c, w_1\}), \ldots, (\lambda_r, \{c, w_r\})$ where, for each $j\in[r]$, $w_j\in H_i$ for some $i\in N$. Denote by $\vec{p}$ any fractional committee which such a randomized committee implements.
    If we sum the LHS of the IFS guarantees of all agents in N, we have that
    \begin{align*}
        \sum_{i\in N} u_i(\vec{p}) = \sum_{j\in [r]} \lambda_j \sum_{i\in N} (1 + u_{iw_j})
     = \sum_{j\in [r]} \lambda_j (n + H)
     = n + H.
    \end{align*}

    Thus, for some agent $i\in N$, it holds that $u_i(\vec{p}) \leq 1 + H/n.$ Finally, see that $\frac{1}{n} \max_{\vec{t}\in \mathcal{X}} u_i(\vec{t}) = 2H$.
    Together, these observations show us that for any instance in which $H>n$, there will be at least one which agent who does not receive their IFS guarantee.
\end{proof}

\section{Conclusion}

In this paper, we initiated the study of PB lotteries and used this approach to study best-of-both-worlds fairness in PB.
We provided a complete set of results for two natural restrictions of PB with cardinal utilities.\footnote{Another common way to measure each voter's satisfaction regarding a PB outcome is to use the total cost of this voter's approved projects; this setting is referred to as \emph{PB with cost utilities}.
We defer our best-of-both-worlds fairness results for this setting to \Cref{sec:PB:cost-utilities}.}
Specifically, we gave algorithms which compute a lottery that guarantees each voter certain expected utility while maintaining the strongest indivisible PB fairness notions ex post.
While we focused on fairness, it is an interesting future direction to seek best-of-both-worlds results for other desiderata, such as efficiency.

\section*{Acknowledgements}

This work was partially supported by the NSF-CSIRO grant on ``Fair Sequential Collective Decision-Making'' and the ARC Laureate Project FL200100204 on ``Trustworthy AI''.

\bibliographystyle{plainnat}
\bibliography{bibliography}

\clearpage
\appendix

\section{BoBW Fairness in PB with Cost Utilities}
\label{sec:PB:cost-utilities}

In this section, we consider the model restriction in which voters have \emph{cost utilities}, that is, $u_i(c) \in \{0, \cost(c)\}$ for all $i \in N, c \in C$.
We refer to this setting as \emph{PB with cost utilities}, which generalizes approval-based committee voting.
We point out that approval sets~$A_i$ are still well-defined in this setting as the set of projects~$c$ for which $u_i(c) \neq 0$.
Thus, \Cref{alg:pb_bw_mes} remains well-defined in PB with cost utilities.
\MES itself is defined for the general setting with additive utilities \citep{PPS21a}.
However, whereas in that setting, it only attains the fairness guarantee of \emph{EJR up to one project}, it is able to achieve the considerably stronger property of \emph{EJR up to any project (EJR-x)} under cost utilities \citep{BFL+23a}, which we will define now.

\begin{definition}[EJR-x for PB with cost utilities]
An outcome~$W$ is said to satisfy \emph{EJR up to any project (EJR-x)} if for each~$T \subseteq C$ and every $T$-cohesive group of voters~$S \subseteq N$, there is a voter~$i \in S$ such that $u_i(W \cup \{c\}) > u_i(T)$ for all~$c \in T\setminus W$.
\end{definition}

We will now show that, under cost utilities, \Cref{alg:pb_bw_mes} attains GFS and Strong UFS.

\begin{lemma}
\label{lem:cost_gfs}
In PB with cost utilities, \Cref{alg:pb_bw_mes} computes an integral outcome sampled from a lottery that is ex-ante GFS.
\end{lemma}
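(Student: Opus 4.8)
I would prove Lemma~\ref{lem:cost_gfs} by showing that, under cost utilities, the fractional outcome $\vec{p}$ constructed by \Cref{alg:pb_bw_mes} satisfies the GFS inequality directly. Fix an arbitrary group $S \subseteq N$. The key observation is that in \Cref{alg:pb_bw_mes} every voter $i \in N$ spends their entire budget $B/n$: during \MES they spend $\sum_{j \in W_\MES} y_{ij}$, and afterward the remaining budget $\budgetMES_i$ is fully spent on some approved project (if $\approvalBallotOfAgent{i} \setminus W_\MES \neq \emptyset$) or spent arbitrarily otherwise. Moreover, because utilities are cost utilities, every dollar a voter spends is spent on a project they approve, so $p_c > 0$ only if $c$ is approved by whoever paid for it, and crucially $u_{ic} = \cost(c)$ whenever $c \in \approvalBallotOfAgent{i}$. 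Therefore for a project $j$ with $p_j > 0$, the amount $\sum_{i \in N} y_{ij} = p_j \cdot \cost(j)$ is split among voters who all approve $j$, and $\max_{i \in S} u_{ij} = \cost(j)$ for any such $j$ that some voter in $S$ paid for.

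**The main computation.** I would lower-bound the left-hand side of the GFS inequality, $\sum_{j \in C} p_j \cdot \max_{i \in S} u_{ij}$, by restricting attention to the spending of voters in $S$ only:
\[
\sum_{j \in C} p_j \cdot \max_{i \in S} u_{ij}
= \sum_{j \in C} \frac{\sum_{i \in N} y_{ij}}{\cost(j)} \cdot \max_{i \in S} u_{ij}
\geq \sum_{j \in C} \frac{\sum_{i \in S} y_{ij}}{\cost(j)} \cdot \max_{i \in S} u_{ij}.
\]
For each $j$ with $\sum_{i \in S} y_{ij} > 0$, some voter in $S$ approved $j$ (cost utilities), so $\max_{i \in S} u_{ij} = \cost(j)$, and the summand equals $\sum_{i \in S} y_{ij}$. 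Summing, this is $\sum_{i \in S} \sum_{j \in C} y_{ij} = \sum_{i \in S} B/n = |S| \cdot B/n$. So the LHS of GFS is at least $|S| \cdot B/n$. On the right-hand side, for each voter $i$ the optimal fractional outcome under budget $B$ gives utility $\max_{\vec{t} \in \feasibleFracOutcomes{}} u_i(\vec{t})$; since $u_{ic} \in \{0, \cost(c)\}$, the utility-per-cost of any approved project is exactly $1$, so spending the full budget $B$ on approved projects yields utility exactly $B$ (assuming $\cost(\approvalBallotOfAgent{i}) \geq B$; otherwise it is $\cost(\approvalBallotOfAgent{i}) \leq B$, which only helps). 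Hence $\frac{1}{n}\sum_{i \in S} \max_{\vec t} u_i(\vec t) \leq \frac{1}{n} \cdot |S| \cdot B = |S| \cdot B/n$, matching the lower bound on the LHS.

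**The main obstacle.** I expect the delicate point to be the boundary case where a voter's approval set is too cheap to absorb the full budget $B$, i.e.\ $\cost(\approvalBallotOfAgent{i}) < B$, so that $\max_{\vec{t} \in \feasibleFracOutcomes{}} u_i(\vec{t}) = \cost(\approvalBallotOfAgent{i}) < B$; one must check that the inequality still goes the right way, which it does since the RHS only shrinks. A second subtlety is the arbitrary spending in \cref{alg:pb_bw_mes:fund-arbitrarily} for voters with $\approvalBallotOfAgent{i} \subseteq W_\MES$: such a voter's leftover $\budgetMES_i$ may be spent on projects outside $\approvalBallotOfAgent{i}$, so those dollars do not contribute a full $\cost(j)$ to $\max_{i' \in S} u_{i'j}$ via that voter — but since $\approvalBallotOfAgent{i} \subseteq W_\MES$ means $i$ already receives utility $\cost(\approvalBallotOfAgent{i})$, I would handle such voters separately, noting their full optimal utility $\cost(\approvalBallotOfAgent{i})$ is already realized by $W_\MES \subseteq \mathrm{supp}(\vec p)$, and run the spending argument above only over the remaining voters in $S$. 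Assembling these cases gives the GFS inequality for every $S$, completing the proof.
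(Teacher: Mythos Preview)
Your spending framework is the same as the paper's, and your treatment of voters $i\in S\cap N'$ (all of $B/n$ spent on approved projects) is correct. The gap is in the ``separate handling'' of voters $i\in S'' := S\setminus N'$. You observe that such a voter achieves full utility $\cost(A_i)$ because $A_i\subseteq W_{\MES}$, but the GFS left-hand side is $\sum_{j}p_j\cdot\max_{i'\in S}u_{i'j}$, not $\sum_{i\in S}u_i(\vec p)$, and there is no way to add the quantities $\cost(A_i)$ for $i\in S''$ to the bound $|S'|\cdot B/n$ from the spending argument without double counting: the projects in $A_i$ for $i\in S''$ may coincide with projects that voters in $S'$ paid for during \MES, so the two contributions overlap. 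Concretely, the same chain you use,
\[
\text{LHS}\;\ge\;\sum_{i\in S}\sum_{j\in A_i} y_{ij},
\]
still needs, for each $i\in S''$, a lower bound on $\sum_{j\in A_i} y_{ij}$ --- and ``$A_i\subseteq W_{\MES}$'' says nothing about how much \emph{voter~$i$ herself} paid for those projects.

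The paper closes exactly this gap with one additional observation about \MES: for any project $j$ selected by \MES, the per-utility price satisfies $\rho_j\ge 1/n$ (otherwise the total payment $\sum_{i'}\min(b_{i'},u_{i'}(j)\rho_j)\le n\cdot \cost(j)\rho_j<\cost(j)$, contradicting $\rho_j$-affordability). Under cost utilities this yields $y_{ij}=\cost(j)\cdot\rho_j\ge \cost(j)/n$ for every $j\in A_i$ whenever $i$ has not yet exhausted her budget; hence for $i\in S''$ either $\sum_{j\in A_i}y_{ij}=B/n$ or $\sum_{j\in A_i}y_{ij}\ge \cost(A_i)/n$. Plugging this per-voter bound into the same spending inequality gives GFS. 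So your plan is missing precisely this $\rho_j\ge 1/n$ step; once you add it, your argument and the paper's coincide.
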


\begin{proof}
Recall that $y_{ij}$ is used to denote the amount of budget spent by voter $i$ on project $j$ over the course of \Cref{alg:pb_bw_mes}.
We begin our proof by showing that, for each~$i \in N$, it holds that:
\begin{equation}
\label{eq:claim}
\sum_{j\in A_i} y_{ij} \geq \frac1n \min \{ B, \cost(A_i)\}.
\end{equation}

First consider voter~$i \in N'$.
By the definition of \MES and the specification of \cref{alg:pb_bw_mes:fund-favorite-project}, voter~$i$ only spends on projects which she approves.
Since voter~$i$ spends her entire budget, we have that
\[
\sum_{j\in A_i} y_{ij} = \frac{B}n \geq \frac1n  \min \{ B, \cost(A_i)\}.
\]

Now suppose instead that~$i \not\in N'$ and thus $A_i\subseteq W_{\MES}$.
Fix~$j \in A_i$.
Project~$j$ was added in the \MES phase and thus $y_{ij} = \min (b_i, u_i(j) \cdot \rho_j)$ where~$b_i$ here denotes the amount of budget~$i$ had remaining at the step in which~$j$ was selected, and~$\rho_j$ is the value of~$\rho$ which determined the payments for project~$j$.
If $b_i \leq u_i(j) \cdot \rho_j$, then voter~$i$ spends his entire budget during the course of \MES and \Cref{eq:claim} follows.
If not, then
\[
y_{ij} = u_i(j) \cdot \rho_j = \cost(j) \cdot \rho_j \geq \frac1n \cost(j).
\]
To see why the first inequality holds, assume for a contradiction that $\rho_j<\frac1n$.
Then,
\[
\sum_{i\in N} \min(b_i, u_{ij}\cdot \rho_j) \leq \sum_{i\in N_j} \cost(j)\cdot \rho_j < \cost(j),
\]
and thus~$j$ is not $\rho_j$-affordable.
This leads to a contradiction. Thus, we have shown that each voter $i$ either spends their entire budget during \MES or $y_{ij} \geq \frac1n \cost(j)$ for all $j\in A_i$. \Cref{eq:claim} follows.

Fix any~$S \subseteq N$.
We can now show that~$\vec{p}$, the fractional outcome returned by \Cref{alg:pb_bw_mes}, satisfies GFS:
\begin{align*}
\sum_{j \in C} \Large( p_j \cdot \max_{i \in S} u_{ij} \Large) &= \sum_{j\in \bigcup_{i\in S} A_i} p_j\cdot \cost(j) \\
&= \sum_{j\in \bigcup_{i\in S} A_i} \frac{\sum_{i\in N} y_{ij}}{\cost(j)} \cdot \cost(j) \\
&\geq \sum_{i\in S} \sum_{j\in A_i} y_{ij} \\
&\geq \sum_{i\in S} \frac1n \min \{ B, \cost(A_i)\} \\
&\geq \frac{1}{n} \sum_{i \in S} \max_{\vec{t} \in \feasibleFracOutcomes{}} u_i(\vec{t}).
\qedhere
\end{align*}
\end{proof}

\begin{lemma}
\label{lem:cost_sufs}
    In PB with cost utilities, \Cref{alg:pb_bw_mes} computes an integral outcome sampled from a lottery that is ex-ante Strong UFS.
\end{lemma}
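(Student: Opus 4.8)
The plan is to follow the template of the proof of \Cref{thm:PB:approval:sUFS+EJR}, but to exploit the fact that under cost utilities the utility a voter derives from a fractional outcome is exactly the (cost-weighted) amount of budget spent on her approved projects, which collapses the delicate part of that argument. First I would pin down the right-hand side of the Strong UFS guarantee (\Cref{eq:PB:sUFS-guarantee}) in this model: for a unanimous group $S$ with common approval set $A_S$ we have $u_i(\vec{t}) = \sum_{c \in A_S} t_c \cost(c)$, so an optimal $\vec{t} \in \feasibleFracOutcomes{|S|\cdot \frac{B}{n}}$ just funds approved projects as far as the budget reaches (always achievable since $|S|\cdot\tfrac{B}{n} \le B \le \cost(C)$), giving $\max_{\vec{t} \in \feasibleFracOutcomes{|S|\cdot \frac{B}{n}}} u_i(\vec{t}) = \min\{|S|\cdot \tfrac{B}{n},\, \cost(A_S)\}$. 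Hence it suffices to prove $u_i(\vec{p}) \ge \min\{|N^z|\cdot \tfrac{B}{n},\, \cost(A_{N^z})\}$ for every unanimous group $N^z$ and every $i \in N^z$.

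Next I would rewrite the left-hand side using \cref{alg:pb_bw_mes:construct-fractional}: since $p_c = \bigl(\sum_{j \in N} y_{jc}\bigr)/\cost(c)$, we get $u_i(\vec{p}) = \sum_{c \in A_i} p_c \cost(c) = \sum_{c \in A_i} \sum_{j \in N} y_{jc}$, i.e.\ the total amount all voters spend on $i$'s approved projects. The argument then splits into two cases. If $A_{N^z} \subseteq W_\MES$, then \MES fully funds every project in $A_i$, so $p_c = 1$ for all $c \in A_i$ and $u_i(\vec{p}) = \cost(A_{N^z}) \ge \min\{|N^z|\cdot \tfrac{B}{n},\, \cost(A_{N^z})\}$. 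If instead $A_{N^z} \setminus W_\MES \ne \emptyset$, then every member of $N^z$ lies in $N'$; since \MES only ever charges a voter for projects she approves, and \cref{alg:pb_bw_mes:fund-favorite-project} routes each such voter's post-\MES leftover $b_j$ onto an approved project $\kappa_j \in A_j \setminus W_\MES$, each $j \in N^z$ ends up spending its entire $B/n$ on projects in $A_j = A_i$, i.e.\ $\sum_{c \in A_i} y_{jc} = B/n$. Restricting the double sum to $N^z$ then yields $u_i(\vec{p}) \ge \sum_{j \in N^z}\sum_{c \in A_i} y_{jc} = |N^z|\cdot \tfrac{B}{n} \ge \min\{|N^z|\cdot \tfrac{B}{n},\, \cost(A_{N^z})\}$. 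Feasibility of $\vec{p}$, and the passage from ``$\vec{p}$ is Strong UFS'' to ``the lottery is ex-ante Strong UFS,'' are inherited from \Cref{thm:PB:approval:sUFS+EJR} via \Cref{thm:PB:Gandhi-et-al}.

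The main point requiring care is the case $A_{N^z} \setminus W_\MES \ne \emptyset$: one must be sure the group's full endowment $|N^z|\cdot \tfrac{B}{n}$ genuinely lands on approved projects without overshooting the per-project caps $p_c \le 1$. This is where the \EJR (indeed \EJR-x) guarantee of $W_\MES$ enters, exactly as in \Cref{claim:PB:approval:EJR-utility-guarantee}: if some project in $A_{N^z}$ is left unfunded by \MES then $A_{N^z}$ cannot be ``cheap,'' i.e.\ $\cost(A_{N^z}) > |N^z|\cdot \tfrac{B}{n}$, so there is enough residual capacity among $A_{N^z}$ to absorb the leftover budget of the group. I would formalise this small capacity claim and then plug it in; beyond it, the proof is routine. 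It is worth contrasting this with the binary-utilities case (\Cref{thm:PB:approval:sUFS+EJR}), where the analogous step required the comparison \Cref{eq:PB:approval:BW-MES-PB:payment-comparison} of \MES payments against $\cost(G_{N^z})$ together with a $\rho$-affordability argument; here, because utility is measured directly in spent budget, the group's own budget $|N^z|\cdot \tfrac{B}{n}$ immediately lower-bounds $u_i(\vec{p})$ and no such argument is needed.
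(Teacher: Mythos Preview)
Your approach is essentially the same as the paper's: split on whether $A_{N^z}\subseteq W_\MES$, and in the interesting case use that every voter in the group lies in $N'$ and therefore spends her entire $B/n$ on approved projects, so $u_i(\vec p)=\sum_{c\in A_i}\sum_{j\in N}y_{jc}\ge\sum_{j\in N^z}\sum_{c\in A_i}y_{jc}=|N^z|\cdot\tfrac{B}{n}\ge\max_{\vec t\in\feasibleFracOutcomes{|N^z|\cdot B/n}}u_i(\vec t)$. The paper presents exactly this chain (without even computing the RHS explicitly as $\min\{|N^z|\cdot\tfrac{B}{n},\cost(A_{N^z})\}$).

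One remark: the ``capacity'' detour you flag---arguing via \EJR that $\cost(A_{N^z})>|N^z|\cdot\tfrac{B}{n}$ so there is room for the leftover budget without violating $p_c\le 1$---is not needed for the Strong UFS inequality and the paper does not include it. The inequality $u_i(\vec p)\ge |N^z|\cdot\tfrac{B}{n}$ only uses that each $j\in N^z$ satisfies $\sum_{c\in A_i}y_{jc}=B/n$, which follows directly from \MES charging only approved projects and \cref{alg:pb_bw_mes:fund-favorite-project}. Whether $\vec p$ stays in $[0,1]^m$ is a feasibility question the paper treats separately (and asserts holds ``by the design of \Cref{alg:pb_bw_mes}''), so you can safely drop that step from this lemma's proof.
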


\begin{proof}
    Consider an arbitrary unanimous group $S\subseteq N$. Denote $A=A_i \forall i\in S$ as the approval set of each voter in $S$.
    If $A\subseteq W_\MES$, then Strong UFS follows immediately since each voter in $S$ is achieving their optimal utility.
    If not, then each voter expends the remainder of their budget in \cref{alg:pb_bw_mes:fund-favorite-project}, and thus for each $i\in S$, we have that
    \begin{align*}
         u_i(\vec{p}) = \sum_{c\in A} p_c \cdot \cost(c) = \sum_{c\in A} \sum_{i\in N} y_{ic} \geq \sum_{i\in S} \sum_{c\in A} y_{ic} \ = \  |S| \frac{B}n \ \geq \ \max_{\vec{t} \in \feasibleFracOutcomes{|S| \cdot \frac{B}{n}}} u_i(\vec{t}).
     \end{align*}
     The last inequality follows because, due to cost utilities, each unit of budget spent can bring at most one unit of utility to the voter.
\end{proof}

Combining the ex-ante fairness properties shown by the previous two lemmata with the known ex-post fairness of \MES under cost utilities, we obtain the following best-of-both-worlds fairness result.

\begin{theorem}
    In PB with cost utilities, \Cref{alg:pb_bw_mes} computes an integral outcome sampled from a lottery that is ex-ante GFS, ex-ante Strong UFS, ex-post \BBOne, and ex-post EJR-x in polynomial time.
\end{theorem}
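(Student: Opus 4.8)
The plan is to assemble the four claimed properties from results already established; the argument is essentially bookkeeping. The two ex-ante guarantees are exactly the content of \Cref{lem:cost_gfs} and \Cref{lem:cost_sufs}, which show that the fractional outcome $\vec{p}$ computed by \Cref{alg:pb_bw_mes} satisfies GFS and Strong UFS, respectively. Since $W$ is sampled from a lottery implementing $\vec{p}$, the lottery is ex-ante GFS and ex-ante Strong UFS. For this to be meaningful I would first record that $\vec{p}$ is feasible --- that is, $p_c \le 1$ for all $c$ and $\cost(\vec{p}) = B$ --- which holds because each voter spends their entire $B/n$ budget over the course of the algorithm and $p_c \le 1$ is maintained throughout, so $\cost(\vec{p}) = \sum_{j \in C}\sum_{i \in N} y_{ij} = n \cdot \frac{B}{n} = B$, exactly as in the proof of \Cref{thm:PB:approval:sUFS+EJR}.

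Next I would apply \Cref{thm:PB:Gandhi-et-al} to $\vec{p}$. Property~\ref{enum:degree-preservation} gives that the sampled outcome $W$ is \BBOne with probability one, the third desired property. Property~\ref{enum:marginal-distribution} gives $\E[P_j] = p_j$ for every $j \in C$; since \MES fully funds every project in $W_\MES$ and the post-processing steps of \Cref{alg:pb_bw_mes} only add payments to projects outside $W_\MES$, we have $p_j = 1$ for every $j \in W_\MES$, whence $P_j = 1$ almost surely and $W_\MES \subseteq W$ with probability one.

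The only property requiring more than a citation is ex-post EJR-x. By \citet{BFL+23a}, $W_\MES$ satisfies EJR-x under cost utilities. I would then show that EJR-x is preserved under taking supersets: fix a $T$-cohesive group $S$ and let $i \in S$ witness EJR-x for $W_\MES$, so $u_i(W_\MES \cup \{c\}) > u_i(T)$ for every $c \in T \setminus W_\MES$. Since $W_\MES \subseteq W$ we have $T \setminus W \subseteq T \setminus W_\MES$, and by monotonicity of the additive utility $u_i$, $u_i(W \cup \{c\}) \ge u_i(W_\MES \cup \{c\}) > u_i(T)$ for every $c \in T \setminus W$; moreover, whether $S$ is $T$-cohesive does not depend on the outcome. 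Hence $i$ witnesses EJR-x for $W$ as well, and since $W_\MES \subseteq W$ with probability one, $W$ is ex-post EJR-x.

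The ``main obstacle,'' such as it is, is precisely this last monotonicity step: one has to make sure that the projects the dependent-rounding procedure may add on top of $W_\MES$ cannot destroy any cohesive group's guarantee, which the argument above resolves. Finally, for the running time I would note that \MES runs in polynomial time \citep{PPS21a}, the two loops over voters in \Cref{alg:pb_bw_mes} run in polynomial time, and the procedure of \Cref{thm:PB:Gandhi-et-al} runs in polynomial time; hence \Cref{alg:pb_bw_mes} does too.
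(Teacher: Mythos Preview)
Your proposal is correct and follows essentially the same route as the paper: feasibility of $\vec{p}$, then \Cref{lem:cost_gfs} and \Cref{lem:cost_sufs} for the ex-ante properties, then \Cref{thm:PB:Gandhi-et-al} for \BBOne and $W_\MES \subseteq W$, and finally \citet{BFL+23a} for EJR-x. You spell out the monotonicity of EJR-x under supersets and the polynomial-time accounting more explicitly than the paper does, but these are elaborations of the same argument rather than a different approach.
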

\begin{proof}
    First, we show that the fractional outcome~$\vec{p}$ returned by \Cref{alg:pb_bw_mes} is feasible, i.e., $\cost(\vec{p}) = B$.
By the design of \Cref{alg:pb_bw_mes}, at any step, we maintain $p_c \leq 1$ for all~$c \in C$.
Next, since each voter starts with a budget of $B/n$ and spends the entirety of their budget, by the construction of $\vec{p}$ in \cref{alg:pb_bw_mes:construct-fractional} we have that
\begin{align*}
\sum_{j \in C} p_j \cdot \cost(j)
= \sum_{j \in C} \frac{\sum_{i \in N} y_{ij}}{\cost(j)} \cdot \cost(j)
= \sum_{i \in N} \sum_{j \in C} y_{ij} = \sum_{i \in N} B/n = B.
\end{align*}

The ex-ante fairness properties follow from \Cref{lem:cost_gfs} and \Cref{lem:cost_sufs}.
By \Cref{thm:PB:Gandhi-et-al}, the outcome sampled in \cref{alg:pb_bw_mes:sample} satisfies ex-post BB1 and contains $W_\MES$.
Finally, it follows from \citet{BFL+23a} that $W\supseteq W_\MES$ satisfies EJR-x.
\end{proof}
\end{document}